\documentclass[journal, onecolumn,]{IEEEtran}

\usepackage{amsmath,amsfonts,amssymb,amsthm}
\interdisplaylinepenalty=2500
\usepackage{cite}
\usepackage{algorithmic}
\usepackage{algorithm}
\usepackage[caption=false,font=footnotesize]{subfig}
\usepackage{graphicx}
\usepackage{subfig}
\DeclareGraphicsExtensions{.pdf,.jpeg,.png}
\usepackage{booktabs}
\usepackage{xcolor}
\usepackage{microtype}
\usepackage{array}
\usepackage[switch]{lineno} 
\usepackage{physics}
\usepackage[normalem]{ulem}
\usepackage{palatino}

\newcommand{\textbm}[1]{\textbf{\texttt{\color{white!25!black}{#1}}}}

\newtheorem{theorem}{Theorem}
\newtheorem{proposition}{Proposition}
\newtheorem{definition}{Definition}
\newtheorem{assumption}{Assumption}

\newtheorem{lemma}{Lemma}
\newtheorem{remark}{Remark}

\newcommand{\Sy}{\mathbb{S}}
\renewcommand{\Re}{\mathbb{R}}

\newcommand{\inner}[2]{\langle #1, #2 \rangle}
\renewcommand{\leq}{\leqslant}

\def\0{\mathbf{0}}
\def\1{\mathbf{1}}

\def\beps{\boldsymbol{\epsilon}}
\def\b{\boldsymbol{b}}
\def\e{\boldsymbol{e}}

\def\h{\boldsymbol{h}}
\def\p{\boldsymbol{p}}
\def\q{\boldsymbol{q}}

\def\u{\boldsymbol{u}}
\def\v{\boldsymbol{v}}
\def\w{\boldsymbol{w}}
\def\x{\boldsymbol{x}}
\def\y{\boldsymbol{y}}
\def\z{\boldsymbol{z}}
\def\ze{\mathbf{0}}

\def\A{\mathbf{A}}
\def\B{\mathbf{B}}
\def\E{\mathbf{E}}
\def\F{\mathbf{F}}
\def\G{\mathbf{G}}
\def\H{\mathbf{H}}
\def\I{\mathbf{I}}
\def\J{\mathbf{J}}

\def\K{\mathbf{K}}
\def\M{\mathbf{M}}
\def\N{\mathbf{N}}
\def\P{\mathbf{P}}

\def\R{\mathbf{R}}
\def\S{\mathbf{S}}

\def\V{\mathbf{V}}
\def\D{\mathbf{D}}
\def\W{\mathbf{W}}
\def\sqrtD{\D^{\frac{1}{2}}}
\def\invsqrtD{\D^{-\frac{1}{2}}}

\def\Re{\mathbb{R}}
\def\Z{\mathbb{Z}}

\newcommand{\prox}{\mathrm{prox}}
\newcommand{\fix}{\mathrm{fix}}
\DeclareMathOperator*{\diag}{diag}


\def\cN{\mathcal{N}}


\begin{document}

\title{Linear Convergence of Plug-and-Play Algorithms with Kernel Denoisers}

\author{Arghya~Sinha,
        Bhartendu~Kumar,
        Chirayu~D.~Athalye,
        and~Kunal~N.~Chaudhury
        
\thanks{A.~Sinha and K.~N.~Chaudhury are with the Indian Institute of Science, Bengaluru: 560012, India. B.~Kumar was at IISc during this research work. C.~D.~Athalye is at BITS Pilani, K~K~Birla Goa Campus, Goa: 403726, India. Correspondence: kunal@iisc.ac.in.}
\thanks{A. Sinha was supported by the PMRF fellowship TF/PMRF-22-5534 from the Government of India.

C. D. Athalye was supported by BITS Pilani under the NFSG Grant.

K.~N.~Chaudhury was supported by grant STR/2021/000011 from the Government of India.}
}

\markboth{}%
{Sinha \MakeLowercase{\textit{et al.}}: Convergence for Kernel Denoiser}

\maketitle

\begin{abstract}
The use of denoisers for image reconstruction has shown significant potential, especially for the Plug-and-Play (PnP) framework. In PnP, a powerful denoiser is used as an implicit regularizer in proximal algorithms such as ISTA and ADMM. The focus of this work is on the convergence of PnP iterates for linear inverse problems using kernel denoisers. It was shown in prior work that the update operator in standard PnP is contractive for symmetric kernel denoisers under appropriate conditions on the denoiser and the linear forward operator. Consequently, we could establish global linear convergence of the iterates using the contraction mapping theorem. In this work, we develop a unified framework to establish global linear convergence for symmetric and nonsymmetric kernel denoisers. Additionally, we derive quantitative bounds on the contraction factor (convergence rate) for inpainting, deblurring, and superresolution. We present numerical results to validate our theoretical findings.
\end{abstract}

\begin{IEEEkeywords}
 image regularization, plug-and-play, kernel denoiser, contraction mapping, linear convergence.
\end{IEEEkeywords}

\IEEEpeerreviewmaketitle

\section{Introduction}
\label{sec:intro}

\IEEEPARstart{T}{he} use of denoisers for iterative image reconstruction was pioneered by the seminal works \cite{romano2017little,sreehari2016plug}, where classical image denoisers such as  NLM \cite{buades2010image} and BM3D \cite{dabov2007image} were used for image reconstruction. Later, it was shown that one could get state-of-the-art reconstructions using trained denoisers~\cite{Dong2018_DNN_prior,zhang2021plug}. Thus, PnP enables the integration of deep image priors with model-based inversion \cite{bouman2022foundations}. However, it is also known that standard pretrained denoisers, such as DnCNN~\cite{zhang2017beyond}, may cause the reconstruction process to diverge~\cite{terris2024equivariant,nair2022plug,reehorst2018regularization}.  Consequently, providing theoretical guarantees for such denoisers is challenging and often requires redesigning or retraining the network~\cite{CWE2017,reehorst2018regularization,Ryu2019_PnP_trained_conv,raj2019gan,gavaskar2020plug,liu2021study,gavaskar2021plug,cohen2021has,gavaskar2023PnPCS,hurault2022gradient,pesquet_learning_2021,cohen2021regularization,proximal-hurault22a,goujon_learning_2024}.

In this work, we revisit the Plug-and-Play (PnP) method \cite{sreehari2016plug}, where regularization is performed by replacing the proximal operator in algorithms such as ISTA and ADMM \cite{parikh2014proximal} with an off-the-shelf denoiser. More specifically, consider the problem of recovering an image  $\x \in \Re^n$ from linear measurements 
\begin{equation}
\label{eq:fm}
\b = \A \x + \boldsymbol{\beps}, 
\end{equation}
where $\A \in \Re^{m \times n}$ is the forward operator, $\b \in \Re^m$ is the observed image, and $\boldsymbol{\beps} \in \Re^m$ is white Gaussian noise. The standard inversion framework involves the loss function 
\begin{equation}
\label{eq:loss}
f(\x) = \frac{1}{2} \lVert \A\x-\b \rVert_2^2
\end{equation}
and a convex regularizer $g \colon \Re^n \to (-\infty,\infty]$. The reconstruction is performed by considering the optimization problem
\begin{equation}
\label{eq:fplusg}
\min_{\x \in \Re^n} \  f(\x) + g(\x),
\end{equation}
and solving it iteratively using proximal algorithms \cite{parikh2014proximal,beck2017book}. For example, starting with $\x_0 \in \Re^n$, the updates in ISTA are given by
\begin{equation}
\label{eq:ista}
 \x_{k+1} = \prox_{\gamma g} \big(\x_k - \gamma \nabla \! f(\x_k) \big),
\end{equation}
where $\gamma$ is the step size and $ \prox_{\gamma g}$ is the proximal operator of $\gamma g$~\cite{beck2017book},
\begin{equation}
\label{eq:proxdefn}
\prox_{\gamma g} (\x) = \underset{\z \in \Re^n}{\arg \min} \ \left\{  \frac{1}{2} \| \z - \x \|_2^2 + \gamma g(\z) \right\}.
\end{equation}

We can alternatively solve \eqref{eq:fplusg} using ADMM, which is known to exhibit better empirical convergence \cite{parikh2014proximal}. Starting with $\y_0, \z_0 \in \Re^n$ and some fixed parameter $\rho > 0$, the updates in ADMM are given by
\begin{align}
\label{eq:admm}
\x_{k+1} &= \prox_{\rho f} (\y_k - \z_k),  \nonumber\\ 
\y_{k+1} &= \prox_{\rho g} (\x_{k+1} + \z_k),  \\ 
\z_{k+1} &= \z_k + \x_{k+1} - \y_{k+1}. \nonumber
\end{align}

Both ISTA and ADMM are known to converge to a minimizer of \eqref{eq:fplusg} under mild technical assumptions~\cite{beck2017book,eckstein1992douglas}.

\subsection{PnP algorithms}

PnP is founded on a simple yet profound insight: rather than explicitly defining the regularizer $g$ and computing its proximal operator in \eqref{eq:ista} and \eqref{eq:admm}, one can directly replace the proximal operator with a powerful denoiser. After all, the proximal operator of $g$ inherently serves as a denoising mechanism in these algorithms. More specifically, we perform the update as
\begin{equation}
\label{eq:pnp-ista}
 \x_{k+1} = \W \big(\x_k- \gamma \nabla \!f(\x_k)\big),
\end{equation}
where $\prox_{\gamma g}$ in \eqref{eq:ista} is replaced by a denoising operator $\W$. This is referred to as \textbm{PnP-ISTA}. Similarly, in the PnP variant of \eqref{eq:admm}, called \textbm{PnP-ADMM}, the updates are performed as follows:
\begin{align}
\label{eq:pnp-admm}
\x_{k+1} &= \prox_{\rho f} (\y_k - \z_k),  \nonumber\\ 
\y_{k+1} &= \W (\x_{k+1} + \z_k),  \\ 
\z_{k+1} &= \z_k + \x_{k+1} - \y_{k+1}. \nonumber
\end{align}

As suggested by the notation in \eqref{eq:pnp-ista} and \eqref{eq:pnp-admm}, our focus is on the use of linear denoisers for $\W$, such as NLM  \cite{buades2010image}, LARK \cite{takeda2007kernel}, DSG-NLM \cite{sreehari2016plug}, and GLIDE \cite{talebi2013global}. This combination of linear inverse problems and linear denoisers is the simplest PnP model that works well in practice \cite{sreehari2016plug,nair2022plug,nair2019hyperspectral} and is analytically tractable \cite{ACK2023-contractivity,AC2023-correction,gavaskar2021plug}.  A detailed discussion of linear denoisers is provided in Section~\ref{sec:kd}.

PnP gives state-of-the-art results with trained denoisers \cite{zhang2017beyond,zhang2021plug,Ryu2019_PnP_trained_conv,zhang2017learning}. 
However, from an operator-theoretic perspective, it is not immediately clear why the sequence generated by \eqref{eq:pnp-ista} should converge, let alone produce good results, unless the trained denoiser is proximable or an averaged operator~\cite{bauschke2017convex}. In particular, it has been shown that trained denoisers can result in divergence of the PnP iterations~\cite{terris2024equivariant,reehorst2018regularization,nair2024averaged} and, thus, a convergence guarantee can act as a safeguard. The convergence aspect of PnP is an active area of research \cite{CWE2017,reehorst2018regularization,Ryu2019_PnP_trained_conv,raj2019gan,gavaskar2021plug,cohen2021has,gavaskar2023PnPCS}.

In this paper, we complete the work started in \cite{ACK2023-contractivity}, where it is shown that \textbm{PnP-ISTA} iterates converge linearly for symmetric denoisers \cite{sreehari2016plug,talebi2013global,milanfar2013symmetrizing}. This is based on the observation that \eqref{eq:pnp-ista} can be viewed as a fixed-point iteration $\x_{k+1} = \P(\x_k)$, where the update operator $\P$ is derived from the forward and denoising operators $\A$ and $\W$. Similarly, the $\{\y_k\}$ iterates in \eqref{eq:pnp-admm} can be written as $\y_k =\W\u_k$, where $\u_k$ is given by the fixed-point iteration $\u_{k+1} = \R(\u_k)$, and $\R$ is again derived from $\A$ and $\W$. The main result in \cite{ACK2023-contractivity} is that if $\W$ is a symmetric denoiser and $\A$ corresponds to inpainting, deblurring, or superresolution, then  $\P$ and $\R$ are contractive for any $0 < \gamma < 2$ and $\rho >0$ (see Section~\ref{sec:SPLresults}). Subsequently, the contraction mapping theorem \cite{bauschke2017convex} was used to guarantee global linear convergence of \eqref{eq:pnp-ista} and \eqref{eq:pnp-admm} to a unique reconstruction. Notably, the contractivity of $\P$ and $\R$ was deduced from niche properties of $\W$ such as positive semidefiniteness and irreducibility. 

\subsection{Contributions} 

The present work is motivated by the question of whether the results in \cite{ACK2023-contractivity} can be extended to a broader range of kernel denoisers, such as NLM \cite{buades2010image}, {which are neither symmetric nor nonexpansive with respect to the standard Euclidean inner product}, but are faster than their symmetric counterparts \cite{sreehari2016plug}. Our main results are as follows.

\begin{enumerate}

\item We present a counterexample to show that the update operator in \textbm{PnP-ISTA} is generally not contractive if $\W$ is nonsymmetric, i.e., the analysis in \cite{ACK2023-contractivity} cannot directly be extended to kernel denoisers, even if we work with a different norm.

\item Subsequently, motivated by prior work \cite{gavaskar2021plug}, we consider scaled variants of \textbm{PnP-ISTA} and \textbm{PnP-ADMM} for (nonsymmetric) kernel denoisers and prove that the respective update operators are contractive. However, the contractivity in question is not w.r.t. the Euclidean norm but w.r.t. a quadratic norm derived from $\W$. This way, we can extend the linear convergence guarantee in \cite{ACK2023-contractivity} from symmetric denoisers to nonsymmetric denoisers. In particular, we develop a unified convergence theory for symmetric and nonsymmetric denoisers.

\item We derive bounds on the contraction factor in terms of problem parameters such as the subsampling rate, the spectral gap of $\W$, and parameters $\gamma$ and $\rho$. It turns out that the actual contraction factors are barely less than $1$, and a delicate analysis is required to compute bounds that are strictly less than $1$. In particular, we study the effect of various measurement and algorithmic parameters on the actual and estimated contraction factors.
\end{enumerate}

\subsection{Organization}
 
In Sec.~\ref{sec:back}, we provide the necessary background on kernel denoisers and discuss our prior work. We motivate the problem and discuss scaled PnP algorithms that go naturally with kernel denoisers in Sec.~\ref{sec:scaledPnP}.
We present our main results in Sections~\ref{sec:ctrscaledpnp} and \ref{sec:boundctr}, which are numerically validated in Sec.~\ref{sec:exp}. 

\subsection{Notation} 
We denote the all-ones vector in $\Re^n$ by $\e$. The spectrum and the spectral norm of $\M$ are denoted by $\sigma(\M)$ and $\|\M\|_2$. We use $\Sy^n$ to denote the space of $(n \times n)$ symmetric matrices and $\Sy^n_+$ to denote the set of $(n \times n)$ symmetric positive semidefinite matrices. We use $\fix(\M)$ to denote the fixed points of $\M$, i.e., $\fix(\M) = \{\x: \, \M\x=\x\}$. For any $\v \in \Re^n$, we use $\diag(\v)$ to represent a diagonal matrix with $\v$ as its diagonal elements; additionally, we use $\diag(\M)$ to denote the vector consisting of the diagonal elements of $\M$. We use $\M \sim \N$ if $\M$ and $\N$ are similar. Unless stated otherwise, we denote an abstract inner product on $\mathbb{R}^n$ as $\langle \cdot, \cdot \rangle$ and the associated norm as $\|\cdot\|$. In particular, we use $\langle \cdot, \cdot \rangle_2$ for the dot product on $\mathbb{R}^n$ and $\|\cdot\|_2$ for the corresponding norm.

\section{Background}
\label{sec:back}

\subsection{Kernel denoiser}
\label{sec:kd}
 
A kernel denoiser is abstractly a linear transform that takes a noisy image and returns the denoised output.   More precisely, let $X \colon \Omega \to [0,1]$ be the input image, where $\Omega  \subset \Z^2$ is a finite square grid, and let $X_{\p}$ be the intensity value at location $\p \in \Omega$. 
A kernel denoiser is specified using a kernel function $\phi \colon \Omega \times \Omega \to \Re$, where $\phi(\p,\q)$ measures the affinity between pixels $\p, \q \in \Omega$ \cite{milanfar2013symmetrizing}. The denoiser output $\W(X)$ is given by
\begin{equation}
\label{neighbor}
\forall \ \p \in \Omega: \quad (\W (X))_{\p} = \frac{\sum_{\q \in \Omega} \phi(\p, \q) \, X_{\q}}{\sum_{\q \in \Omega} \phi(\p, \q)}.
\end{equation}
In other words, the output at a given pixel is obtained using a weighted average of neighboring pixels, where the weights are derived from the affinity between pixels. 

From \eqref{neighbor}, it is clear that \(\W\) is a linear operator on a finite-dimensional vector space. Hence, it is convenient to express \eqref{neighbor} in matrix form.  More precisely, if we linearly index $\Omega$ and let $\x \in \Re^n$ be the vector representation of  $X$ corresponding to this indexing, we can represent \eqref{neighbor} as
\begin{equation}
\label{defkernel}
\W = \D^{-1} \K, \qquad \D = \diag(\K \e),
\end{equation}
where $\K$ is called the kernel matrix. We refer the reader to \cite{milanfar2013symmetrizing} for a detailed account of this class of denoisers. Apart from denoising \cite{buades2010image,takeda2007kernel,nair2019hyperspectral,unni2020_pnp_registration}, kernel denoisers have also been used for graph signal processing \cite{yazaki2019interpolation,nagahama2021graph}.

We note that there are subtle variations across the literature in the definition of $\K$ for NLM \cite{sreehari2016plug,buades2010image,milanfar2013tour}. 
These variations do not significantly impact the denoising performance but can affect the mathematical properties. To avoid confusion, we explicitly state the properties required for our analysis.
\begin{assumption}
\label{assumpK}
The kernel matrix $\K$ has these properties: 
\begin{enumerate}
\item $\K \in \Sy^n_+$.
\item $\K$ is nonnegative and irreducible. 
\item $\K_{ii} = 1$ for $1 \leqslant i \leqslant n$.
\end{enumerate}
\end{assumption}
In particular, since $\K$ is nonnegative and $\K_{ii} = 1$, $\D_{ii} \geqslant 1$. Thus, $\D$ is invertible, which is implicitly used in \eqref{defkernel}. 

A commonly used example of a kernel function in the literature is 
\begin{equation}
\label{phi}
\phi(\p,\q) = G\big(\zeta(\p)-\zeta(\q)\big) \, \mathrm{Hat}(\p-\q),
\end{equation}
where $G$ is an isotropic multivariate Gaussian with standard deviation $h$, $\zeta(\p)$ is a patch around $\p$ extracted from a guide image, and $\mathrm{Hat}$ is a hat function supported on a square window $\Theta$ around the origin ($\Theta$ is used to restrict the averaging to a local window). The guide is the input image or some surrogate of the ground truth estimated from the observed image.

For the kernel $\phi$ in \eqref{phi}, we can verify that the corresponding $\K$ follows Assumption \ref{assumpK}. Clearly, $\K$ is symmetric, nonnegative, and $\K_{ii} = 1$ in this case.
Furthermore, we can show that $\K$ is irreducible if $|\Theta| > 1$. Indeed, this follows from the path-based characterization of irreducibility \cite{meyer2000matrix}: for any two pixels $\p,\q \in \Omega$, we can find a sequence of pixels $\p=\p_1, \p_2, \ldots, \p_N = \q$ such that $\p_{t+1} \in \p_t + \Theta$ for  $t=1,\ldots,N-1$. On the other hand, the property $\K \in \Sy^n_+$ is more subtle; this property is not guaranteed if a box function is used in place of $\mathrm{Hat}$, as done in the standard setting~\cite{buades2010image}. That $\K \in \Sy^n_+$ follows from Bochner's theorem~\cite{katznelson2004introduction}, and we refer the reader to \cite[Appendix B]{sreehari2016plug} for a detailed explanation. In short, both $G(\cdot)$ and $\mathrm{Hat}(\cdot)$ are positive definite functions since they are the Fourier transforms of nonnegative functions. Consequently, the corresponding kernel matrices \(\K_G\) and \(\K_{\mathrm{Hat}}\) are positive semidefinite, and thus their Hadamard product \(\K = \K_G \odot \K_{\mathrm{Hat}}\) is also positive semidefinite.

As a consequence of Assumption~\ref{assumpK}, we have the following structure on $\W$.
\begin{proposition}
\label{propertiesW}
If Assumption~\ref{assumpK} holds, then
\begin{enumerate}
\item[(1)] $\W$ is nonnegative and irreducible. (Perron matrix)
\item[(2)]  $\W\e=\e$.  (stochastic matrix) 
\item[(3)] Eigenvalues of $\W$ are $1=\lambda_1 > \lambda_2 \geqslant \cdots \geqslant \lambda_n \geqslant 0$. 
\end{enumerate}
\end{proposition}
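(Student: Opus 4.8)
The plan is to establish the three claims in order, exploiting the factorization $\W = \D^{-1}\K$ together with the three properties of $\K$ in Assumption~\ref{assumpK}.

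Claims (1) and (2) should follow quickly. Since $\K$ is nonnegative with unit diagonal, $\D_{ii} = (\K\e)_i \geq \K_{ii} = 1 > 0$, so $\D^{-1}$ is a positive diagonal matrix; as $\K$ is nonnegative, $\W = \D^{-1}\K$ is nonnegative. For irreducibility, I would note that left-multiplying by a positive diagonal matrix leaves the zero/nonzero pattern of the entries unchanged, so $\W$ and $\K$ have the same directed graph, and $\W$ therefore inherits irreducibility from $\K$. For claim (2), since $\D = \diag(\K\e)$ we have $\D\e = \K\e$, whence $\W\e = \D^{-1}\K\e = \D^{-1}\D\e = \e$; combined with nonnegativity, this makes $\W$ row-stochastic.

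The heart of the argument is claim (3), for which the key device is symmetrization. Conjugating by $\sqrtD$ gives $\sqrtD \W \invsqrtD = \invsqrtD \K \invsqrtD =: \widetilde{\W}$, so $\W \sim \widetilde{\W}$ and the two matrices share the same spectrum. The matrix $\widetilde{\W}$ is symmetric, and being a congruence $\invsqrtD \K \invsqrtD$ of the positive semidefinite $\K$, it lies in $\Sy^n_+$; hence $\sigma(\W) = \sigma(\widetilde{\W}) \subset [0,\infty)$, so all eigenvalues of $\W$ are real and nonnegative, giving $\lambda_n \geq 0$. Row-stochasticity then bounds the spectral radius by $\|\W\|_\infty = 1$, and since claim (2) exhibits $1$ as an eigenvalue, every eigenvalue must lie in $[0,1]$. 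Finally, to secure the strict gap $\lambda_1 = 1 > \lambda_2$, I would invoke the Perron--Frobenius theorem for nonnegative irreducible matrices: the Perron root, equal to the spectral radius $1$, is a simple eigenvalue. As all eigenvalues are nonnegative reals, $1$ is the unique eigenvalue of modulus one, and its simplicity forces $\lambda_2 < 1$.

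I expect the only genuine obstacle to be this last step. The symmetrization cleanly confines the spectrum to $[0,1]$ using only the symmetry and positive semidefiniteness of $\K$, but prying $\lambda_1$ apart from $\lambda_2$ is precisely where the irreducibility hypothesis on $\K$ becomes indispensable; without it one could still guarantee that $1$ is an eigenvalue but not that it is simple, leaving open the degenerate possibility $\lambda_1 = \lambda_2 = 1$.
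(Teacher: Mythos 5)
Your proof is correct and follows essentially the same route as the paper: properties (1) and (2) from the positivity of the diagonal entries of $\D$ and the identity $\D\e=\K\e$, and property (3) via the similarity $\W \sim \invsqrtD \K \invsqrtD \in \Sy^n_+$ (which confines the spectrum to $[0,1]$) combined with the Perron--Frobenius theorem to make the gap $\lambda_1 > \lambda_2$ strict. Your version simply spells out the details (e.g., the zero-pattern argument for irreducibility and the $\|\W\|_\infty$ bound on the spectral radius) that the paper leaves implicit.
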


Properties (1) and (2) follow from the properties of $\K$ and the fact that $\D_{ii} > 0$. To deduce (3), we write \eqref{defkernel} as
\begin{equation*}
\W = \invsqrtD \M  \sqrtD, \quad \M:=\invsqrtD \K \invsqrtD.
\end{equation*}
Since $\W \sim \M$, where $\M \in \Sy^n_+$, the eigenvalues of $\W$ must be nonnegative. Furthermore, it follows from (1) and (2) that $\sigma(\W) \subset [0,1]$ and that $\lambda_1=1$ is its largest eigenvalue. 
In fact, since $\W$ is a Perron matrix, it follows from the Perron-Frobenius theorem \cite{meyer2000matrix} that $\lambda_2,\ldots,\lambda_n< 1$. The difference $\lambda_1 - \lambda_2$, which will play an important role in our analysis, is called the ``spectral gap''. This is related to the connectivity of the graph associated with $\W$: the spectral gap is nonzero if and only if the graph is connected, and the gap increases with the connectivity \cite{chung1997spectral}.

Note that while \(\D\) and \(\K\) are symmetric, \(\W\) in~\eqref{defkernel} does not necessarily have to be symmetric. Specifically, while \(\W\) is row-stochastic, it is generally not doubly stochastic. This class of denoisers includes NLM~\cite{buades2010image} and LARK~\cite{takeda2007kernel}, which we call kernel denoisers in this paper. On the other hand, denoisers such as GLIDE~\cite{talebi2013global} and DSG-NLM~\cite{sreehari2016plug} have the additional property that \(\W\) is symmetric, and their definitions differ from \eqref{defkernel}. In particular, the DSG-NLM protocol from \cite[Section IV]{sreehari2016plug} can be applied to any kernel denoiser to obtain a symmetric \(\W\) that satisfies the properties in Proposition~\ref{propertiesW}. We will refer to them as symmetric denoisers to distinguish them from kernel denoisers.

\begin{remark}
A natural question is whether the standard analysis of $\textbf{ISTA}$ can be directly applied to $\textbm{PnP-ISTA}$. Indeed, this approach was taken in the original PnP paper~\cite{sreehari2016plug}. However, this method applies only to symmetric denoisers that are proximable \cite{Moreau1965, chaudhury:hal-05038838}, i.e., the denoiser is the proximal operator of some convex function. Since kernel denoisers are generally non-symmetric, they are not proximable, and therefore, the convergence analysis in~\cite{sreehari2016plug} does not extend to them. Additionally, this has to do with objective convergence, which is generally weaker than iterate convergence––the focus of the present work. In this regard, we note that Krasnosel’skii-Mann (KM) theory~\cite{bauschke2017convex} can establish iterate convergence of \textbf{ISTA}. However, KM theory applies to \textbm{PnP-ISTA} only if the denoiser is proximable or, more generally, an averaged operator. {As we will show later, a kernel denoiser may fail to be nonexpansive and, consequently, cannot be an averaged operator, given that averaged operators are necessarily nonexpansive.} In summary, ensuring objective and iterate convergence for kernel denoisers is challenging. We can establish both forms of convergence for symmetric denoisers, but we cannot generally guarantee linear convergence as done in this work.

We also remark that $\textbm{PnP-ISTA}$ seems to converge with kernel denoisers in practice. However, we have not yet found a formal proof or a counterexample. As evidence of the empirical convergence, we present some reconstructions obtained using the NLM denoiser in \textbm{PnP-ISTA} in Figs.~\ref{fig:deblurring} and \ref{fig:superresolution}.
\end{remark}

\subsection{Prior work}
\label{sec:SPLresults}

It was shown in \cite{ACK2023-contractivity} that \textbm{PnP-ISTA} converges linearly if $\W$ is a symmetric denoiser. This is based on the observation that on substituting \eqref{eq:loss} in  \eqref{eq:pnp-ista}, we get
\begin{equation}
\label{eq:PnPiter}
\x_{k+1} = \P\x_k + \q,
\end{equation}
where 
\begin{equation}
\label{defP}
\P= \W(\I-\gamma \A^\top\!  \A) \quad \mbox{and} \quad   \q =\gamma \W\A^\top \b.
\end{equation}
That is, we can write \eqref{eq:PnPiter} as the fixed-point iteration $\x_{k+1} = T(\x_k)$, where $T(\x) = \P \x+\q$. In particular, $T$ is a contraction if $\| \P\|_2 < 1$. In this regard, we recall the following result \cite{ACK2023-contractivity}.
\begin{theorem}
\label{thm:symWcontr}
If $\W$ is a symmetric denoiser and $\gamma \in (0,2)$, then $\|\P\|_2 < 1$ for inpainting, deblurring, and superresolution. 
\end{theorem}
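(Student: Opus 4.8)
The plan is to exploit the spectral structure of the two factors of $\P = \W\B$, where $\B := \I - \gamma\A^\top\!\A$, and to reduce the claim to an equality-case analysis on the unit sphere. First I would record the facts that come for free. Since $\W$ is a symmetric denoiser, Proposition~\ref{propertiesW} gives $\sigma(\W) = \{1 = \lambda_1 > \lambda_2 \geq \cdots \geq \lambda_n \geq 0\}$ with $\W\e = \e$, so the eigenvalue-one eigenspace of $\W$ is exactly $\mathrm{span}(\e)$; a spectral-decomposition computation then yields $\|\W\|_2 = 1$ and, crucially, $\|\W\y\|_2 = \|\y\|_2$ if and only if $\y \in \mathrm{span}(\e)$. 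Separately, in each of the three forward models one has $\|\A\|_2 \leq 1$ (inpainting: $\A^\top\!\A$ is a $0/1$ diagonal; deblurring: the blur kernel is nonnegative with unit sum, so $|\hat a_k| \leq 1$; superresolution: $\A = \S\H$ with $\|\S\|_2 = 1$ and $\|\H\|_2 \leq 1$). Hence $\sigma(\A^\top\!\A) \subset [0,1]$, and for $\gamma \in (0,2)$ the symmetric matrix $\B$ satisfies $\sigma(\B) \subset (1-\gamma,\,1] \subset (-1,1]$ and $\|\B\|_2 \leq 1$. Because $-1 \notin \sigma(\B)$, the subspace on which $\B$ preserves norm is precisely its eigenvalue-one eigenspace, namely $\nul(\A^\top\!\A) = \nul(\A)$.

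Given these, $\|\P\|_2 \leq \|\W\|_2\,\|\B\|_2 \leq 1$ is immediate, so the entire content is the strict inequality. I would argue by contradiction using compactness of the unit sphere: if $\|\P\|_2 \geq 1$, then $\|\P\|_2 = 1$ is attained at some unit vector $\x$, and the chain $1 \leq \|\W\B\x\|_2 \leq \|\B\x\|_2 \leq \|\B\|_2 \leq 1$ forces every inequality to be an equality. The equality $\|\W(\B\x)\|_2 = \|\B\x\|_2$ forces $\B\x = c\,\e$ for some scalar $c$, while the equality $\|\B\x\|_2 = \|\x\|_2$ forces $\x \in \nul(\A)$ (by the previous paragraph), whence $\B\x = \x$. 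Combining, $\x = c\,\e$; since $\x$ is a unit vector we have $c \neq 0$, and therefore $\e \in \nul(\A)$, i.e. $\A\e = \0$.

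It then remains to rule out $\A\e = \0$ for each operator, which is where the three cases genuinely enter and which I expect to be the main obstacle, since the crude submultiplicative bound only gives $\leq 1$ and separating strict from non-strict contraction hinges entirely on this verification. For inpainting $\A\e = \1_S \neq \0$ (the observed set is nonempty); for deblurring the unit-sum kernel gives $\A\e = \e \neq \0$; for superresolution $\A\e = \S\H\e = \S\e \neq \0$ (a downsampled constant image). In every case $\A\e \neq \0$ contradicts the conclusion $\A\e = \0$ of the equality analysis, so the assumption $\|\P\|_2 \geq 1$ is untenable and $\|\P\|_2 < 1$. The delicate point worth emphasizing is that strict contraction rests on two structural facts working together: the simplicity of the Perron eigenvalue of $\W$ (so that $\W$ preserves norm only along $\e$) and the fact that the constant vector is never annihilated by the forward operator $\A$; the step size restriction $\gamma < 2$ enters precisely to keep $-1$ out of $\sigma(\B)$.
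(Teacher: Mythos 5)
Your proof is correct and follows essentially the same route as the paper's: you factor $\P$ into two self-adjoint operators with spectra in $(-1,1]$, observe that each preserves norm only on its fixed-point subspace (the paper's Prop.~\ref{prop:ctr-fix}), and reduce strict contractivity to the trivial intersection $\fix(\W)\cap\nul(\A)=\{\ze\}$, verified via $\A\e\neq\ze$ exactly as in Prop.~\ref{prop:RNP}. Your equality-case argument on the unit sphere is just the contrapositive phrasing of the two-case argument in Lemma~\ref{MN-lemma-1}, so the two proofs are the same in substance.
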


Consequently, it follows from the contraction mapping theorem~\cite[Thm.~1.48]{bauschke2017convex} that $(\x_k)$ converges linearly to a unique reconstruction for any initialization $\x_0 \in \Re^n$. In other words, \textbm{PnP-ISTA} is guaranteed to converge globally at a geometric rate for inpainting, deblurring, and superresolution. A similar result was also derived for \textbm{PnP-ADMM} in \cite{ACK2023-contractivity}.

Thm.~\ref{thm:symWcontr} is a nontrivial observation since $\W$ and $ \I-\gamma \A^\top\!  \A$ are not contractive w.r.t. the spectral norm. Indeed, $\| \W \|_2=1$ from Prop.~\ref{propertiesW}, and $\| \I-\gamma \A^\top\!  \A \|_2 \geqslant 1$ since $\A$ has a nontrivial nullspace for inpainting, deblurring, and superresolution. Thm.~\ref{thm:symWcontr} can be deduced using a subtle relation between the eigenspaces of these operators.

 We remark that the irreducibility property in Assumption~\ref{assumpK} is crucial for Theorem~\ref{thm:symWcontr}. For example, \(\W = \I\) (though not corresponding to a practical denoiser) satisfies all the properties in Proposition~\ref{propertiesW} except irreducibility.  In this case, $\P=\I-\gamma \A^\top\!  \A$, so that any nonzero vector in the null space of \(\A\) is a fixed point of \(\P\). Consequently, $\P$ cannot be a contraction operator.

A natural question is whether Thm.~\ref{thm:symWcontr} can be extended to a kernel denoiser. It was empirically observed in \cite[Table 1]{ACK2023-contractivity} that $\|\P\|_2 > 1$  for kernel denoisers. This is essentially because of the nonsymmetry of $\W$. Indeed, consider the setting $\A=\I$, corresponding to having full measurements. In this case, $\P=(1-\gamma) \W$ and $\| \P \|_2 = |1-\gamma| \, \|\W\|_2$. Now, consider a kernel denoiser $\W$ that is not doubly stochastic, i.e., while the row sums equal one, the column sums do not. We can show that $\|\W\|_2 > 1$ for such a denoiser. Subsequently, we have $\| \P \|_2 >1$ for $0 < \gamma < 1- \|\W\|^{-1}_2$. That is, no matter how small $\varepsilon$ is, we can construct a kernel denoiser $\W$ and $\A$ for which $\| \P \|_2 >1$ for some $ \gamma$ in $(0,\varepsilon)$. However, this example does not eliminate the possibility that \( \P \) may be contractive in a different norm.

As an exception, it was shown in \cite{ACK2023-contractivity} that linear convergence can be established for inpainting with kernel denoisers. The key idea was to establish contractivity of $\P$ in a different operator norm. Specifically, consider the operator norm 
\begin{equation}
\label{Dnorm}
\|\P\|_{\D} = \max \, \Big\{\|\P\x\|_{\D}: \ \|\x\|_{\D}=1 \Big\},
\end{equation}
induced by the following inner product and norm on $\Re^n$:
\begin{equation}
\label{D-ip-norm}
 \langle \x, \y \rangle_{\D}=\x^\top \! \D \y,\qquad  \|\x\|_{\D} = \sqrt{\langle \x, \x \rangle}_{\D},
\end{equation}
where $\D$ is given by \eqref{defkernel}. We will refer to \eqref{Dnorm} as the $\D$-norm. The following is a restatement of \cite[Thm.~3]{ACK2023-contractivity}.
\begin{theorem}
\label{thm:kernelWcontr}
If $\W$ is a kernel denoiser and $\gamma \in (0,2)$, then $\|\P\|_{\D} < 1$ for inpainting. 
\end{theorem}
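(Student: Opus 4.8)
The plan is to reduce the $\D$-norm of $\P$ to an ordinary spectral norm by a diagonal similarity transformation, and then win strictness through an equality-case analysis that exploits the spectral structure of the symmetrized denoiser $\M$.

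First I would record the structure of the forward operator. For inpainting, $\A$ is a pixel-subsampling operator, so $\S := \A^\top\!\A = \diag(\s)$ is a diagonal $0/1$ matrix, where $s_i = 1$ marks the measured pixels. Being a genuine inpainting problem, $\s$ has at least one measured pixel (and typically at least one unmeasured one), so $\I - \gamma\A^\top\!\A = \I - \gamma\S$ is diagonal, equal to $1$ at unmeasured pixels and $1-\gamma$ at measured pixels. Next I would convert the $\D$-norm into a Euclidean operator norm: writing $\|\x\|_{\D} = \|\sqrtD\x\|_2$ and substituting $\y = \sqrtD\x$ gives $\|\P\|_{\D} = \|\sqrtD\,\P\,\invsqrtD\|_2$. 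Using $\W = \invsqrtD \M \sqrtD$ from Prop.~\ref{propertiesW} (so that $\sqrtD\W = \M\sqrtD$), together with the fact that $\sqrtD$, $\invsqrtD$, and the diagonal matrix $\I - \gamma\S$ all commute, this collapses to
\[
\|\P\|_{\D} = \big\|\M(\I - \gamma\S)\big\|_2, \qquad \M = \invsqrtD\K\invsqrtD \in \Sy^n_+ .
\]

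This reduction is the crux: it replaces the nonsymmetric $\W$ by the symmetric positive semidefinite $\M$, whose eigenvalues coincide with those of $\W$, namely $1 = \lambda_1 > \lambda_2 \geqslant \cdots \geqslant \lambda_n \geqslant 0$, and whose top eigenvector is the strictly positive vector $\sqrtD\e$. Indeed $\K\e = \D\e$ gives $\M(\sqrtD\e) = \invsqrtD\K\invsqrtD\sqrtD\e = \invsqrtD\K\e = \sqrtD\e$, and by Perron-Frobenius the eigenvalue $1$ is simple. With $\B := \I - \gamma\S$, we have $\|\M\|_2 = 1$ and $\|\B\|_2 \leqslant 1$, so submultiplicativity only yields $\|\M\B\|_2 \leqslant 1$; I would then argue strictness by contradiction. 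Suppose $\|\M\B\|_2 = 1$. Compactness of the unit sphere yields a unit vector $\u$ with $\|\M\B\u\|_2 = 1$, forcing equality throughout $\|\M\B\u\|_2 \leqslant \|\B\u\|_2 \leqslant 1$. The constraint $\|\B\u\|_2 = \|\u\|_2$, combined with $|1-\gamma| < 1$ for $\gamma \in (0,2)$, forces $\u$ to vanish on every measured pixel, whence $\B\u = \u$. The constraint $\|\M\u\|_2 = \|\u\|_2$ with $\M$ symmetric PSD of spectral norm $1$ then forces $\u$ into the eigenspace of the eigenvalue $1$, i.e.\ $\u \propto \sqrtD\e$. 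But $\sqrtD\e$ is strictly positive, while $\u$ vanishes on the nonempty set of measured pixels --- a contradiction. Hence $\|\P\|_{\D} = \|\M\B\|_2 < 1$.

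The main obstacle is precisely the equality-case analysis in the last step. Because both $\M$ and $\B$ have unit spectral norm, no single norm inequality can yield a bound strictly below $1$; the strictness comes entirely from the interaction between the geometry of $\B$ (which shrinks exactly the measured coordinates and leaves the others fixed) and the positivity and simplicity of the Perron eigenvector of $\M$ (a consequence of the irreducibility in Assumption~\ref{assumpK}). Isolating this requires the careful chase above, and it is worth noting that this argument is structurally the same as the one behind Thm.~\ref{thm:symWcontr}, now applied to $\M$ in place of $\W$, with the positive Perron eigenvector $\sqrtD\e$ playing the role formerly played by $\e$.
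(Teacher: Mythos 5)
Your proof is correct and is essentially the paper's argument in disguise: conjugating by $\sqrtD$ is exactly the isometry between $(\Re^n,\|\cdot\|_\D)$ and $(\Re^n,\|\cdot\|_2)$, and your equality-case chase reproduces the content of Lemma~\ref{MN-lemma-1} applied with $\M=\W$ and $\N=\I-\gamma\A^\top\!\A$ (both self-adjoint in the $\D$-inner product precisely because $\A^\top\!\A$ is diagonal for inpainting and hence commutes with $\D$), together with the verification that $\fix(\W)\cap\fix(\N)=\{\ze\}$ via the strictly positive Perron eigenvector. The only cosmetic difference is that the paper abstracts the strictness step into a reusable lemma rather than running the argument concretely on the symmetrized matrix $\M=\invsqrtD\K\invsqrtD$.
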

Since the contraction mapping theorem allows for arbitrary norms, we can establish the linear convergence of \eqref{eq:pnp-ista} for a kernel denoiser using Theorem~\ref{thm:kernelWcontr} and the $\|\cdot \|_{\D}$-norm.

\subsection{A counterexample}

Thm.~\ref{thm:kernelWcontr} strongly relies on the diagonal nature of $\A$ for inpainting and cannot be extended to deblurring and superresolution. Using a concrete deblurring example, we will prove that $\|\P\|_{\D} <1 $ cannot be guaranteed with a kernel denoiser. Consider the forward operator $\A = (1/n) \, \e\e^\top$ corresponding to a uniform blur. Consider a hypothetical case where the pixel (and patch) values are $x_1 = \cdots = x_{n-1}=0$ and $x_n = 1$; this is modeled on the pixel distribution near an edge. For any $\varepsilon >0$, consider the kernel matrix $\K_{\varepsilon}$ given by
\begin{equation*}
(\K_{\varepsilon})_{ij} = \phi_{\varepsilon}(x_i - x_j), \quad \phi_{\varepsilon}(t)=\exp(- t^2/\varepsilon^2 ).
\end{equation*}
We can verify that $\K_{\varepsilon}$ satisfies Assumption~\ref{assumpK} for all $\varepsilon>0$. The only nontrivial property is $\K \in \Sy^n_+$; this comes from Bochner's theorem~\cite{katznelson2004introduction} as $\phi_{\varepsilon}$ is the Fourier transform of a nonnegative function. 

Let $\D_{\varepsilon}$ and $\W_{\varepsilon}$ be obtained by letting $\K=\K_{\varepsilon}$ in \eqref{defkernel}, so that $\W_{\varepsilon}$ has the properties in Prop.~\ref{propertiesW}. Since working directly with $\D_{\varepsilon}$ and $\K_{\varepsilon}$ is difficult, we consider their limiting forms:
\begin{equation}
\label{D0}
\D_0 := \lim_{\varepsilon \downarrow 0} \, \D_{\varepsilon} 
= \diag\,(n-1,\ldots,n-1,1),  
\end{equation}
and
\begin{equation}
\label{K0}
\K_0 := \lim_{\varepsilon \downarrow 0} \, \K_{\varepsilon} 
= \begin{pmatrix}
    \H   &   \ze \\
    \ze^\top & 1 
  \end{pmatrix},
\end{equation}
where $\H$ is the $(n-1)  \times (n-1)$ all-ones matrix. If we let $\W_0 = \D^{-1}_0 \K_0$, then 
\begin{equation}
\label{eq:limit}
\lim_{\varepsilon \downarrow 0} \, \| \W_{\varepsilon}  (\I - \A^\top \! \A)\|_{\D_{\varepsilon} } 
= \| \W_0 (\I - \A^\top \!\A)\|_{\D_0}. 
\end{equation}
Moreover, we can show from \eqref{D0} and \eqref{K0} that
\begin{equation*}
\| \W_0 (\I - \A^\top \!\A)\|_{\D_0}   = \frac{1}{n}( 2n^2-4n+4)^{1/2}.
\end{equation*}
The quantity on the right is strictly greater than $1$ for all $n \geqslant 3$. Hence, using the continuity of norm and \eqref{eq:limit}, we can conclude that, for all $n \geqslant 3$, $\| \W_{\varepsilon}  (\I - \A^\top \! \A)\|_{\D_{\varepsilon} } >1$ for some $\varepsilon > 0$. While the above counterexample is somewhat artificial, it tells us that the contractivity of $\P$ cannot generally be guaranteed for deburring using the $\D$-norm.

\section{PnP with Kernel Denoisers}
\label{sec:scaledPnP}

We now explain how iterate convergence of PnP with kernel denoisers can be established. To this end, we consider the class of scaled PnP algorithms introduced in~\cite{gavaskar2021plug}. This work demonstrated that rather than symmetrizing the denoiser using the DSG-NLM protocol—which increases the computational cost–the diagonalizability of kernel denoisers can be leveraged to make them self-adjoint with respect to a special inner product. This requires us to use a slightly modified version of the original PnP algorithm called scaled PnP. In particular, this insight allowed us to interpret a kernel denoiser as a proximal operator, which helped establish objective convergence for scaled PnP~\cite{gavaskar2021plug}. In the following sections, we will develop a common framework to analyze the iterates of \textbf{ISTA} generated by a kernel denoiser that satisfies the properties outlined in Prop.~\ref{propertiesW}, regardless of whether it is symmetric or nonsymmetric. We will use this framework to demonstrate the global linear convergence of the iterates and estimate the convergence rates.

To explain the motivation behind scaled PnP, we observe that although \eqref{defkernel} is nonsymmetric, it is self-adjoint w.r.t. the inner product defined in \eqref{D-ip-norm}. That is, for all $\x, \y \in \Re^n$, 
\begin{equation*}
\langle \W\x,\y \rangle_{\D} = \x^\top \! \K \y=\langle \x,\W\y \rangle_{\D}.
\end{equation*}
Thus, while a symmetric denoiser is self-adjoint w.r.t. the standard dot product on $\Re^n$, a kernel denoiser is self-adjoint w.r.t. the inner product \eqref{D-ip-norm}. This suggests that working in the inner-product space $(\Re^n, \langle \cdot, \cdot \rangle_{\D})$ would be more natural for kernel denoisers. This was the original motivation behind the scaled PnP algorithms in \cite{gavaskar2021plug}, which amounts to reformulating  \textbm{PnP-ISTA} and \textbm{PnP-ADMM} in $(\Re^n, \langle \cdot, \cdot \rangle_{\D})$. In this context, we note that ISTA and ADMM can be developed in the general setting of inner-product spaces \cite{beck2017first,bauschke2017convex}. The former requires us to specify the gradient, and the latter requires the proximal operator, both of which depend crucially on the choice of the inner product. 

We can show from \eqref{D-ip-norm}  that the gradient of \eqref{eq:loss} w.r.t. $\langle \cdot, \cdot \rangle_{\D}$ is $\D^{-1} \, \nabla f(\x_k)$, where $\nabla f$ is the standard gradient operator. As a result, the \textbm{PnP-ISTA} update \eqref{eq:pnp-ista} has to be corrected to
\begin{equation}
\label{eq:scaled-pnp-ista}
\x_{k+1} =\W \big(\x_k - \gamma \D^{-1} \, \nabla f(\x_k) \big),
\end{equation}
where $\W$ is a kernel denoiser. This is referred to as \textbm{Scaled PnP-ISTA} (\textbm{Sc-PnP-ISTA}) in   \cite{gavaskar2021plug}. On substituting \eqref{eq:loss}, we can write \eqref{eq:scaled-pnp-ista} as 
\begin{equation}
\label{fpiter1}
\x_{k+1}=\P_s\x_{k}+\q_s,
\end{equation}
where $\q_s \in \Re^n$ is an irrelevant vector, and 
\begin{equation}
\label{defPs}
\P_s= \W\G_s, \quad  \G_s = (\I-\gamma  \D^{-1} \, \A^\top\!  \A).
\end{equation}

On the other hand, by working with the proximal operator in $(\Re^n, \langle \cdot, \cdot \rangle_{\D})$, we obtain the scaled counterpart of \eqref{eq:pnp-admm}. More precisely, we need to work with 
\begin{equation}
\label{eq:Dproxdefn}
\prox_{\D,\rho f} (\x) = \underset{\z \in \Re^n}{\arg \min} \ \left\{  \frac{1}{2} \| \z - \x \|_{\D}^2 + \rho f(\z) \right\},
\end{equation}
which is obtained by replacing $\| \cdot \|_2$ in \eqref{eq:proxdefn} with the $\D$-norm. In particular, \textbm{Sc-PnP-ADMM} updates  are given by
\begin{align}
\label{eq:scaled-pnp-admm}
\x_{k+1} &=\prox_{\D,\rho f}  (\y_k - \z_k),  \nonumber\\ 
\y_{k+1} &= \W (\x_{k+1} + \z_k),  \\ 
\z_{k+1} &= \z_k + \x_{k+1} - \y_{k+1}. \nonumber
\end{align}
On substituting the loss \eqref{eq:loss} in \eqref{eq:Dproxdefn}, we get
\begin{equation}
\label{D-prox}
\prox_{\D,\rho f} (\x) = \left(\I+\rho \D^{-1} \A^\top \A \right)^{-1}\x + \w,
\end{equation}
where $\w \in \Re^n$ does not depend on $\x$. Importantly, we can write the iterates $\{\y_k\}$  in \eqref{eq:scaled-pnp-admm} as
\begin{equation}
\label{fpiter2}
\y_k =\W\u_k, \quad \u_{k+1} = \R_s \u_k + \h_s,
\end{equation}
where $\h_s$ is an irrelevant quantity. We refer the reader to \cite[Prop.~1]{ACK2023-contractivity} for the precise definition of $\{\u_k\}$. This standard reduction comes from the Douglas-Rachford splitting used in ADMM~\cite{parikh2014proximal}. Importantly, using \eqref{D-prox}, we can show that
\begin{equation}
\label{defRs}
\R_s = \frac{1}{2}\left(\I+\J_s \right), \qquad \J := \F_s \V
\end{equation}
where 
\begin{equation}
\label{def:FV}
\F_s = 2\big(\I+\rho\D^{-1}\A^\top\! \A \big)^{-1}-\I \quad \mbox{and} \quad \V = 2\W-\I.
\end{equation}
Similar to \eqref{defPs}, the operator \eqref{defRs} involves the composition of two linear operators, one determined by the forward model and the other by the denoiser. However, the operator in \eqref{defRs} is more complicated. Note that, unlike in the symmetric case, the operators $\G_s$ and $\F_s$ are controlled by the denoiser via $\D$.

\section{Linear Convergence of PnP}
\label{sec:ctrscaledpnp}

As was observed in \cite{gavaskar2021plug}, we can associate \textbm{Sc-PnP-ISTA}  and \textbm{Sc-PnP-ADMM} with a convex optimization problem of the form \eqref{eq:fplusg}. Consequently, we can establish iterate convergence from standard results  \cite{beck2017first,eckstein1992douglas}. Moreover, if the loss function \eqref{eq:loss} is strongly convex, then the iterates can converge linearly \cite{beck2017first}. However, since $\A$ has a nontrivial nullspace for linear inverse problems \cite{bouman2022foundations}, the loss function cannot be strongly convex. At this point, we clarify what linear convergence means. A sequence $\{\x_k\} \subset \Re^n$ is said to converge linearly if there exist $C>0, \, \delta \in [0,1)$, and some norm $\|\cdot\|$ on $\Re^n$ such that 
\begin{equation}
\label{linconv}
 \forall \, k \geqslant 0:  \quad\|\x_{k+1} - \x_k \| \leqslant C \, \delta^k.
\end{equation}
It follows from \eqref{fpiter1} that $\x_{k+1} - \x_k = \P_s(\x_{k} - \x_{k-1})$. Hence, if we can show that $\|\P_s\|_{\D} < 1$, we would have \eqref{linconv} with $\delta = \|\P_s\|_{\D}$. Similarly, since $\|\W\|_{\D} = 1$ for a kernel denoiser, we have from \eqref{fpiter2} that 
\begin{equation*}
\|\y_{k+1} - \y_k\|_{\D}   =  \|\W(\u_{k} - \u_{k-1})\|_{\D} \leqslant    \|\u_{k} - \u_{k-1}\|_{\D}.
\end{equation*}
Now, recall that $\u_{k+1} = \R_s \u_k$. Hence, if we can show that $\|\R_s\|_{\D} < 1$, we would have \eqref{linconv} for $\{\y_k\}$ with $\delta = \|\R_s\|_{\D}$.

For linear inverse problems such as inpainting, deblurring, and superresolution, the forward operator $\A$ is given by a mix of lowpass blur (convolution) and subsampling operations \cite{bouman2022foundations}. For inpainting, $\A \in \Re^{n \times n}$ is a diagonal matrix with $\A_{ii} \in \{0,1\}$. On the other hand, $\A \in \Re^{n \times n}$ is typically a Gaussian or uniform blur in image deblurring. For superresolution, $\A=\S\B$, where $\S \in \Re^{m \times n}$ is a subsampling operator, and $\B \in \Re^{n \times n}$ is a lowpass blur. We will make the following assumptions that usually hold for such applications.

\begin{assumption}
\label{assumpA}
For inpainting (resp.~superresolution), at least one pixel is observed (resp.~sampled). The blur kernel is nonnegative and normalized for deblurring and superresolution: $\B \geqslant \ze$ and $\B \e=\e$.
\end{assumption}

The above properties will play an essential role in establishing contractivity. Moreover, we will use them to bound the norm of the forward operator.

\begin{remark}
\label{rmk:normA}
It follows from Assumption~\ref{assumpA} that $\|\A\|_2 \leqslant 1$ for inpainting, deblurring, and superresolution. In fact, $\|\A\|_2 = \max_{i} \, |\A_{ii}|= 1$ for inpainting, $\|\A\|_2 =\|\B\|_2=1$  deblurring, and $\|\A\|_2 \leqslant \|\S\|_2 \|\B\|_2= 1$ for superresolution.
\end{remark}

Consider a kernel denoiser $\W$ and a forward operator $\A$ that satisfies Assumption~\ref{assumpA}. Using properties
listed in Prop.~\ref{propertiesW}, we can show that $\|\P_s\|_{\D} \leqslant 1$ and $\|\R_s\|_{\D} \leqslant 1$. However,
this does not necessarily mean they are contractive. Indeed, if there exists a nonzero vector $\u \in \cN(\A)$
which is also a fixed point of $\W$, then we would have
\begin{equation*}
\P_s \u = \W \left( \u - \gamma \D^{-1} \A^\top \!\A \u \right) = \W \u = \u,
\end{equation*}
so that $\P_s$ cannot be contractive. Similarly, if $\u$ is a nonzero vector  in $\cN(\A) \cap \fix(\W)$, then we have $\F_s\u =\u$ and $\V \u=\u$. Consequently, $\R_s \u =\u$, and $\R_s$ cannot be contractive. In other words, for $\P_s$ and $\R_s$ to be contractive, the null space of $\A$ and the fixed points of $\W$ must intersect trivially. We record this property as it plays a fundamental role in our analysis.
\begin{definition}[\textbm{RNP}]
\label{RNP}
Operator $\A$ is said to have the Restricted Nullity Property w.r.t. $\W$ if $\cN(\A) \cap \fix(\W) = \{\ze\}$.
\end{definition}

As noted earlier, $\A$ typically has a nontrivial nullspace, i.e., it is possible that  $\A\x= \ze$ for some nonzero $\x$. However, if we restrict $\x$ to $\fix(\W)$, then \textbm{RNP} stipulates that $\x = \ze$.

\begin{proposition}
\label{prop:RNP}
For inpainting, deblurring, and superresolution, $\A$ has \textbm{RNP} w.r.t symmetric and kernel denoisers.
\end{proposition}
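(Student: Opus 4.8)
The plan is to collapse the \textbm{RNP} condition to a single scalar check on the forward operator, namely $\A\e \neq \ze$, and then to confirm this for each of the three forward models. The engine behind the reduction is that $\fix(\W)$ is forced to be as small as possible by Assumption~\ref{assumpK}.

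First I would determine $\fix(\W)$ exactly. By Proposition~\ref{propertiesW}, $\W$ is a nonnegative irreducible matrix with $\W\e = \e$ and spectrum $1 = \lambda_1 > \lambda_2 \geqslant \cdots \geqslant \lambda_n \geqslant 0$. Because the Perron eigenvalue $\lambda_1 = 1$ is simple---which is exactly the strict inequality $\lambda_1 > \lambda_2$ supplied by the Perron--Frobenius theorem---the eigenspace $\fix(\W) = \cN(\W - \I)$ for eigenvalue $1$ is one-dimensional. Since $\W\e = \e$, that eigenspace is spanned by $\e$, giving $\fix(\W) = \{c\e : c \in \Re\}$. This identification is identical for symmetric and kernel denoisers, as both obey Proposition~\ref{propertiesW}, which is what lets us handle the two cases in one stroke.

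With $\fix(\W)$ in hand, $\cN(\A) \cap \fix(\W) = \{c\e : c\,\A\e = \ze\}$, and this reduces to $\{\ze\}$ exactly when $\A\e \neq \ze$ (while it contains nonzero vectors when $\A\e = \ze$). So \textbm{RNP} is equivalent to $\A\e \neq \ze$, and it remains only to verify this. For inpainting, $\A$ is diagonal with $\A_{ii} \in \{0,1\}$, and Assumption~\ref{assumpA} guarantees some $\A_{ii} = 1$, so that entry of $\A\e$ equals $1$. For deblurring, $\A = \B$ with $\B\e = \e$ by Assumption~\ref{assumpA}, hence $\A\e = \e \neq \ze$. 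For superresolution, $\A = \S\B$, so $\A\e = \S\B\e = \S\e$; since $\S$ selects the sampled coordinates and at least one pixel is sampled, $\S\e$ is nonzero. Thus $\A\e \neq \ze$ in all three cases.

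The only substantive step is the spectral one, and it is essentially given: the strict gap $\lambda_1 > \lambda_2$ in Proposition~\ref{propertiesW}, itself a consequence of irreducibility, is precisely what pins $\fix(\W)$ down to $\mathrm{span}(\e)$. I expect no real difficulty; the point to keep in mind is that irreducibility is indispensable---the excerpt's $\W = \I$ example shows that dropping it inflates $\fix(\W)$ to all of $\Re^n$ and destroys the reduction. The three case checks are then routine consequences of Assumption~\ref{assumpA} and the structure of $\A$ recorded in Remark~\ref{rmk:normA}.
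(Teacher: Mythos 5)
Your proposal is correct and follows essentially the same route as the paper: identify $\fix(\W)=\mathrm{span}(\e)$ from Prop.~\ref{propertiesW}, reduce \textbm{RNP} to checking $\A\e\neq\ze$, and verify this case by case from Assumption~\ref{assumpA}. The only difference is that you spell out why the eigenspace for eigenvalue $1$ is one-dimensional (simplicity of the Perron eigenvalue via irreducibility), which the paper leaves implicit in its citation of Prop.~\ref{propertiesW}.
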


\begin{proof}
We have from Prop.~\ref{propertiesW} that $\fix(\W) = \mathrm{span}({\e})$. Thus, we are done if we can show that $\A\e \neq \ze$. Now, for inpainting, since $\A_{ii}=1$ for some $i$, we have $(\A\e)_i \neq 0$. Also, for deblurring, $\A\e=\e$ and,  for superresolution, $\A\e = \S (\B\e) = \S\e$, which is the all-ones vector in $\Re^m$.
\end{proof}

Remarkably, \textbm{RNP} is sufficient for $\P_s$ and $\R_s$ to be contractive. To establish this, we need the following result. Recall that any self-adjoint operator $\M$ on an inner-product space $(\Re^n,\langle \cdot, \cdot \rangle)$ can be diagonalized in an orthonormal basis~\cite{meyer2000matrix}. Namely, there exists  $\lambda_1,\ldots,\lambda_n \in \Re$ and an orthonormal basis $\q_1,\ldots,\q_n \in \Re^n$ such that $\M\q_i=\lambda_i \q_i$ for all $1 \leqslant i \leqslant n$. We also note that the operator norm of $\M$ is defined as $\norm{\M}=\sup  \, \{\norm{\M\x}: \,  \| \x \|=1\}$.

\begin{lemma}
\label{MN-lemma-1}
Let $\M$ and $\N$ be self-adjoint on $(\Re^n,\langle \cdot, \cdot \rangle)$. Suppose that $\sigma(\M), \sigma(\N) \subset (-1,1]$ and $\fix(\M) \cap \fix(\N) =\{\ze\}$. Then $\|\M\N\| < 1$.
\end{lemma}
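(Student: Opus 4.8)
The plan is to argue by contradiction, exploiting the fact that for a self-adjoint operator whose spectrum sits in the half-open interval $(-1,1]$, the operator norm equals $1$ but is attained \emph{only} along the eigenspace for the eigenvalue $1$, which is precisely the fixed-point set. First I would record the easy bound: since $\M$ and $\N$ are self-adjoint with $\sigma(\M),\sigma(\N) \subset (-1,1]$, each eigenvalue $\lambda$ satisfies $|\lambda| \leqslant 1$, so $\|\M\| = \max_i |\lambda_i| \leqslant 1$ and likewise $\|\N\| \leqslant 1$; submultiplicativity then gives $\|\M\N\| \leqslant 1$ for free. The whole content of the lemma is therefore the \emph{strict} inequality. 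So suppose instead that $\|\M\N\| = 1$. Because $\Re^n$ is finite-dimensional, the supremum defining the operator norm is attained on the compact unit sphere: there is $\x$ with $\|\x\| = 1$ and $\|\M\N\x\| = 1$. Setting $\y = \N\x$ and chaining the two bounds, $1 = \|\M\y\| \leqslant \|\y\| \leqslant \|\x\| = 1$, which forces both $\|\N\x\| = \|\x\|$ and $\|\M\y\| = \|\y\|$; that is, $\N$ preserves the norm of $\x$, and $\M$ preserves the norm of $\y$.

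The crux is to show that a self-adjoint operator whose spectrum lies in $(-1,1]$ can preserve a vector's norm only if that vector is fixed. Concretely, I would diagonalize $\N$ in an orthonormal eigenbasis $\q_1,\ldots,\q_n$ with eigenvalues $\mu_1,\ldots,\mu_n \in (-1,1]$ and expand $\x = \sum_i c_i \q_i$, so that $\|\x\|^2 = \sum_i c_i^2$ and $\|\N\x\|^2 = \sum_i \mu_i^2 c_i^2$. Here is where the hypothesis $\sigma(\N) \subset (-1,1]$ is decisive, and where excluding $-1$ genuinely matters: for $\mu \in (-1,1]$ one has $\mu^2 \leqslant 1$ with equality \emph{iff} $\mu = 1$. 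Hence $\|\N\x\|^2 = \|\x\|^2$ can hold only if $c_i = 0$ for every index with $\mu_i \neq 1$, i.e. $\x$ lies entirely in $\ker(\N - \I) = \fix(\N)$. This single step is the only real obstacle; everything else is bookkeeping, and the subtlety is exactly in ruling out a norm-preserving $-1$ eigendirection, which is why the spectra are assumed to lie in $(-1,1]$ rather than $[-1,1]$.

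Finally I would assemble the pieces. The conclusion $\x \in \fix(\N)$ gives $\y = \N\x = \x$. Applying the identical diagonalization argument to $\M$ and the vector $\y$ (using $\sigma(\M) \subset (-1,1]$ and $\|\M\y\| = \|\y\|$) yields $\y \in \fix(\M)$, that is $\x \in \fix(\M)$. Therefore $\x \in \fix(\M) \cap \fix(\N)$ while $\|\x\| = 1$, contradicting the standing hypothesis $\fix(\M) \cap \fix(\N) = \{\ze\}$. Hence the assumption $\|\M\N\| = 1$ is untenable, and combined with the bound $\|\M\N\| \leqslant 1$ we conclude $\|\M\N\| < 1$.
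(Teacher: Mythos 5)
Your proof is correct and follows essentially the same route as the paper: both arguments reduce to the key fact that a self-adjoint operator with spectrum in $(-1,1]$ preserves the norm of a vector only if that vector is fixed (the paper's Prop.~\ref{prop:ctr-fix}, whose eigendecomposition proof you supply inline), and then use $\fix(\M)\cap\fix(\N)=\{\ze\}$ to rule out a norm-one vector surviving both operators. The only difference is presentational --- you argue by contradiction via attainment of the norm on the unit sphere, while the paper does a direct case split on whether $\x\in\fix(\N)$ --- which changes nothing of substance.
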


This is an adaptation of \cite[Lemma~1]{ACK2023-contractivity}. To keep the account self-contained, we provide a short proof in  Appendix~\ref{pf:MN-lemma-1}. The core idea behind Lemma~\ref{MN-lemma-1} is the following observation.

\begin{proposition}
\label{prop:ctr-fix}
Let $\M$ be self-adjoint on $(\Re^n, \langle \cdot, \cdot \rangle)$ and let $\sigma(\M) \subset (-1,1]$. Then $\| \M \x\| = \|\x\|$ if and only if $\x \in \fix(\M)$, where $\|\cdot \|$ is the norm induced by $\langle \cdot, \cdot \rangle$. 
\end{proposition}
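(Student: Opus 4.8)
The plan is to exploit the spectral theorem for self-adjoint operators that the excerpt has just recalled. Since $\M$ is self-adjoint on $(\Re^n, \langle \cdot, \cdot \rangle)$, there is an orthonormal basis $\q_1, \ldots, \q_n$ and real eigenvalues $\lambda_1, \ldots, \lambda_n \in \sigma(\M) \subset (-1,1]$ with $\M \q_i = \lambda_i \q_i$. Writing an arbitrary $\x \in \Re^n$ as $\x = \sum_{i=1}^n c_i \q_i$ with $c_i = \langle \x, \q_i \rangle$, orthonormality gives $\|\x\|^2 = \sum_i c_i^2$ and $\M\x = \sum_i \lambda_i c_i \q_i$, whence $\|\M\x\|^2 = \sum_i \lambda_i^2 c_i^2$. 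Everything then reduces to comparing these two sums.

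The reverse implication is immediate: if $\x \in \fix(\M)$ then $\M\x = \x$, so trivially $\|\M\x\| = \|\x\|$. For the forward implication I would form the difference
\[
\|\x\|^2 - \|\M\x\|^2 = \sum_{i=1}^n (1 - \lambda_i^2)\, c_i^2,
\]
and note that every summand is nonnegative because $|\lambda_i| \leqslant 1$. Hence the equality $\|\M\x\| = \|\x\|$ forces $(1 - \lambda_i^2)\, c_i^2 = 0$ for each $i$, i.e., $c_i = 0$ whenever $\lambda_i^2 < 1$.

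The crux—and the only place the hypothesis is used at full strength—is to argue that $\lambda_i^2 = 1$ can hold only when $\lambda_i = 1$. This is exactly where the \emph{open} left endpoint of $(-1,1]$ matters: $\lambda_i^2 = 1$ means $\lambda_i \in \{-1,1\}$, but $\lambda_i > -1$ excludes $\lambda_i = -1$, leaving $\lambda_i = 1$. Consequently $c_i = 0$ for every index with $\lambda_i \neq 1$, so $\x$ is a combination of eigenvectors with eigenvalue $1$, giving $\M\x = \sum_{i:\,\lambda_i = 1} c_i \q_i = \x$, i.e., $\x \in \fix(\M)$.

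I expect no genuine obstacle; this is a routine spectral computation. The single point deserving care—and which I would flag explicitly—is the role of the assumption $\sigma(\M) \subset (-1,1]$ rather than $[-1,1]$: an eigenvalue $-1$ would yield a vector with $\|\M\x\| = \|\x\|$ that is \textbf{not} fixed, which is precisely why the open endpoint is imposed. This is also consistent with how the proposition feeds into the proof of Lemma~\ref{MN-lemma-1}, where strict contractivity off the common fixed-point set is what ultimately matters.
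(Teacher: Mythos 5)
Your proof is correct and follows exactly the route the paper indicates: the paper omits the proof of this proposition, remarking only that it ``can easily be deduced from the eigendecomposition of $\M$,'' and your argument supplies precisely that spectral computation, including the key observation that the open left endpoint of $(-1,1]$ rules out eigenvalue $-1$.
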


We skip the proof which can easily be deduced from the eigendecomposition of $\M$.

\begin{theorem}
\label{thm:pnpsistactr}
Let $\W$ be a kernel denoiser. Then, for any $ \gamma \in (0,2)$, we have $\|\P_s\|_\D < 1$ for inpainting, deblurring, and superresolution and hence \textbm{Sc-PnP-ISTA} converges linearly.  
\end{theorem}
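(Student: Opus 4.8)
The plan is to apply Lemma~\ref{MN-lemma-1} in the inner-product space $(\Re^n, \langle \cdot, \cdot \rangle_{\D})$ with $\M = \W$ and $\N = \G_s$, since $\P_s = \W\G_s$ by \eqref{defPs}. This reduces the theorem to checking the lemma's three hypotheses: that both factors are self-adjoint for $\langle \cdot, \cdot \rangle_{\D}$, that both spectra lie in $(-1,1]$, and that $\fix(\W) \cap \fix(\G_s) = \{\ze\}$.

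For self-adjointness, $\W$ is already known to be self-adjoint w.r.t. $\langle \cdot, \cdot \rangle_{\D}$ (Sec.~\ref{sec:scaledPnP}), and I would verify the same for $\G_s = \I - \gamma\D^{-1}\A^\top\!\A$ by noting that $\D\G_s = \D - \gamma\A^\top\!\A$ is symmetric, which is equivalent to $\G_s$ being $\D$-self-adjoint. For the spectra, Prop.~\ref{propertiesW} already gives $\sigma(\W) \subset [0,1] \subset (-1,1]$. The eigenvalues of $\G_s$ are $1 - \gamma\mu$, where $\mu$ ranges over the spectrum of $\D^{-1}\A^\top\!\A$; this operator is $\D$-self-adjoint and positive semidefinite because $\langle \D^{-1}\A^\top\!\A\x, \x \rangle_{\D} = \|\A\x\|_2^2 \geqslant 0$, so $\mu \geqslant 0$.

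The crux is the upper bound $\mu \leqslant 1$. Using the Rayleigh quotient,
\[
\mu_{\max} = \max_{\x \neq \ze} \frac{\|\A\x\|_2^2}{\x^\top\!\D\x},
\]
the numerator is at most $\|\x\|_2^2$ since $\|\A\|_2 \leqslant 1$ (Remark~\ref{rmk:normA}), while the denominator is at least $\|\x\|_2^2$ since $\D_{ii} \geqslant 1$ by Assumption~\ref{assumpK}; hence $\mu \in [0,1]$ and every eigenvalue of $\G_s$ lies in $[1-\gamma, 1]$. Because $\gamma \in (0,2)$ forces $1 - \gamma > -1$, this yields $\sigma(\G_s) \subset (-1,1]$. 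I expect this to be the only non-routine step: it is precisely where $\|\A\|_2 \leqslant 1$ and $\gamma < 2$ are used, and the $\D$-weighting is what makes the denominator dominate via $\x^\top\!\D\x \geqslant \|\x\|_2^2$.

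Finally, since $\gamma > 0$ and $\D$ is invertible, $\G_s\x = \x$ is equivalent to $\A^\top\!\A\x = \ze$, i.e.\ to $\A\x = \ze$, so $\fix(\G_s) = \cN(\A)$; combined with $\fix(\W) = \mathrm{span}(\e)$ from Prop.~\ref{propertiesW} and the RNP from Prop.~\ref{prop:RNP}, this gives $\fix(\W) \cap \fix(\G_s) = \cN(\A) \cap \fix(\W) = \{\ze\}$ for all three modalities. Lemma~\ref{MN-lemma-1} then delivers $\|\P_s\|_{\D} < 1$, and linear convergence follows from the identity $\x_{k+1} - \x_k = \P_s(\x_k - \x_{k-1})$ noted after \eqref{fpiter1}, which iterates to $\|\x_{k+1} - \x_k\|_{\D} \leqslant \|\P_s\|_{\D}^{\,k}\,\|\x_1 - \x_0\|_{\D}$, i.e.\ \eqref{linconv} with $\delta = \|\P_s\|_{\D}$.
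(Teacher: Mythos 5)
Your proposal is correct and follows essentially the same route as the paper: both apply Lemma~\ref{MN-lemma-1} with $\M=\W$ and $\N=\G_s$ in $(\Re^n,\langle\cdot,\cdot\rangle_\D)$, verify $\sigma(\G_s)\subset(-1,1]$ using $\|\A\|_2\leqslant 1$ and $\D_{ii}\geqslant 1$ (the paper via the similarity $\G_s\sim\I-\gamma\D^{-1/2}\A^\top\!\A\D^{-1/2}$, you via the equivalent Rayleigh-quotient computation), and rule out a common fixed point via \textbm{RNP}. Your explicit identification $\fix(\G_s)=\cN(\A)$ is a slightly fuller statement than the paper's check that $\e\notin\fix(\G_s)$, but the argument is the same.
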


\begin{proof}

We apply Lemma~\ref{MN-lemma-1} to \eqref{defPs} with $\M=\W$ and $\N = \G_s$.
It is clear that $\M$ and $\N$ are self-adjoint on $(\Re^n, \langle \cdot, \cdot \rangle_{\D})$. Moreover, from Prop.~\ref{propertiesW}, $\sigma(\M) =\sigma(\W)\subset [0,1]$. On the other hand, note that 
\begin{equation}
\label{eq:sim1}
\N \sim \D^{\frac{1}{2}} \N  \D^{-\frac{1}{2}} = \I-\gamma  \H,
\end{equation}
where 
\begin{equation}
\label{eq:defH}
\H=\D^{-\frac{1}{2}}  \A^\top\!  \A \D^{-\frac{1}{2}}.
\end{equation}
Clearly, $\H \in \S^n_+$. Moreover,  $\|\D^{-\frac{1}{2}} \|_2 \leqslant 1$ and $\|\A\|_2 \leqslant 1$ (see~Remark~\ref{rmk:normA}). Hence, 
\begin{equation*}
\|\H\|_2 \leqslant \|\A\|_2^2\, \|\D^{-\frac{1}{2}} \|_2^2 \leqslant 1.
\end{equation*}
Since $\H$ is self-adjoint, we have $\sigma(\H) \subset [0,1]$. Consequently, as $\gamma \in (0,2)$, we see from \eqref{eq:sim1} that $\sigma(\N) = \sigma(\G_s)\subset (-1,1]$. 

Finally, $\fix(\M) = \mathrm{span}({\e})$ from Prop.~\ref{propertiesW}. We claim that $\e \notin \fix(\N)$; this is because $\N \e=\e$ would imply $\A\e =\ze$, which would violate Prop.~\ref{prop:RNP}. Thus, $\fix(\M) \cap \fix(\N)=\{\ze\}$, and the assumptions in Lemma~\ref{MN-lemma-1} are verified. Hence, we have $\|\P_s\|_{\D}=\|\M \N\|_{\D} <1$. 
\end{proof}

We now establish the contractivity of the operator in \textbm{Sc-PnP-ADMM}.

\begin{theorem}
\label{thm:pnpadmmctr}
Let $\W$ be an invertible kernel denoiser. Then, for any $\rho >0$, we have $\|\R_s\|_\D < 1$ for inpainting, deblurring, and superresolution and \textbm{Sc-PnP-ADMM} converges linearly.  
\end{theorem}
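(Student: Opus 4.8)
The plan is to bound $\|\R_s\|_\D$ by first controlling the norm of $\J_s=\F_s\V$ and then exploiting the structure $\R_s=\tfrac12(\I+\J_s)$ from \eqref{defRs}. Since the operator norm is subadditive and $\|\I\|_\D=1$, once I have $\|\J_s\|_\D<1$ I immediately get
\begin{equation*}
\|\R_s\|_\D=\Big\|\tfrac12(\I+\J_s)\Big\|_\D\leq\tfrac12\big(1+\|\J_s\|_\D\big)<1,
\end{equation*}
which then yields linear convergence of $\{\y_k\}$ via \eqref{fpiter2}. So the whole problem reduces to showing $\|\F_s\V\|_\D<1$, and for this I would invoke Lemma~\ref{MN-lemma-1} with $\M=\F_s$ and $\N=\V$ on the inner-product space $(\Re^n,\langle\cdot,\cdot\rangle_\D)$, exactly mirroring the structure of the proof of Thm.~\ref{thm:pnpsistactr}.

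The first step is self-adjointness. Since $\W$ is self-adjoint w.r.t. $\langle\cdot,\cdot\rangle_\D$, so is $\V=2\W-\I$. For $\F_s$, I would check via the one-line computation $\langle\D^{-1}\A^\top\A\x,\y\rangle_\D=\x^\top\A^\top\A\y=\langle\x,\D^{-1}\A^\top\A\y\rangle_\D$ that $\D^{-1}\A^\top\A$ is self-adjoint w.r.t. $\langle\cdot,\cdot\rangle_\D$; hence $\I+\rho\D^{-1}\A^\top\A$, its inverse, and therefore $\F_s=2(\I+\rho\D^{-1}\A^\top\A)^{-1}-\I$ are all self-adjoint. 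The second step is the spectra. Using the similarity $\D^{-1}\A^\top\A\sim\H$ with $\H$ as in \eqref{eq:defH} and $\sigma(\H)\subset[0,1]$ (established in the proof of Thm.~\ref{thm:pnpsistactr}), the eigenvalues of $\F_s$ are $2/(1+\rho s)-1$ for $s\in\sigma(\H)\subset[0,1]$; for every $\rho>0$ these lie in $(-1,1]$, so $\sigma(\F_s)\subset(-1,1]$.

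The spectrum of $\V$ is where the main obstacle sits. From Prop.~\ref{propertiesW} we have $\sigma(\W)=\{1,\lambda_2,\ldots,\lambda_n\}\subset[0,1]$, so $\sigma(\V)=\{2\lambda_i-1\}\subset[-1,1]$. The hypothesis of Lemma~\ref{MN-lemma-1} requires the \emph{open} left endpoint, i.e.\ $-1\notin\sigma(\V)$, which fails precisely when $\lambda_n=0$. This is exactly why the statement assumes $\W$ is \emph{invertible}: invertibility forces $\lambda_n>0$, hence $2\lambda_n-1>-1$ and $\sigma(\V)\subset(-1,1]$. I expect this to be the only delicate point; the rest is bookkeeping.

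The final step is the fixed-point condition. A direct manipulation gives $\F_s\x=\x\iff\B_s^{-1}\x=\x\iff\rho\D^{-1}\A^\top\A\x=\ze\iff\A^\top\A\x=\ze\iff\A\x=\ze$, so $\fix(\F_s)=\cN(\A)$; and $\V\x=\x\iff\W\x=\x$, so $\fix(\V)=\fix(\W)$. Therefore $\fix(\F_s)\cap\fix(\V)=\cN(\A)\cap\fix(\W)=\{\ze\}$ by \textbf{RNP} (Prop.~\ref{prop:RNP}). With self-adjointness, the spectral inclusions, and the trivial fixed-point intersection all verified, Lemma~\ref{MN-lemma-1} yields $\|\J_s\|_\D=\|\F_s\V\|_\D<1$, and the displayed bound completes the argument.
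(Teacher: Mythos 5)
Your proof is correct and follows essentially the same route as the paper's: reduce to $\|\J_s\|_\D<1$, then apply Lemma~\ref{MN-lemma-1} with $\M=\F_s$, $\N=\V$ after verifying self-adjointness in $\langle\cdot,\cdot\rangle_\D$, the spectral inclusions (with invertibility of $\W$ used exactly where you say, to keep $-1$ out of $\sigma(\V)$), and the trivial fixed-point intersection via \textbm{RNP}. The only differences are cosmetic: you identify $\fix(\F_s)=\cN(\A)$ in full where the paper only checks $\e\notin\fix(\F_s)$, and your undefined symbol $\B_s$ should read $\I+\rho\D^{-1}\A^\top\!\A$.
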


\begin{proof}
Note that it suffices to show that $\|\J_s\|_\D < 1$. Indeed, in this case, we have from \eqref{defRs} and the triangle 
inequality that
\begin{equation*}
\forall \, \x \in \Re^n\backslash \{\0\}: \ \|\R_s \x\|_{\D} \leqslant \frac{1}{2} \Big(\|\x\|_{\D} + \|\J_s \x\|_\D \Big) < \|\x\|_{\D}.
\end{equation*}

We apply Lemma~\ref{MN-lemma-1} to $\J_s$ with $\M=\F_s$ and $\N = \V$. It is not difficult to show that $\F_s$ and $\V$ are  
self-adjoint on $(\Re^n, \langle \cdot, \cdot \rangle_{\D})$. Moreover, $\sigma(\N) =\sigma(\V) \subset (-1,1]$. On the other hand, we claim that $\sigma(\F_s) \subset  [ (1-\rho)(1+\rho)^{-1}, \, 1 ]$, which is contained in $(-1,1]$ for all $\rho >0$. Indeed, it follows from \eqref{def:FV} that
\begin{equation}
\label{eq:sim2}
\F_s \sim \D^{\frac{1}{2}} \F_s\D^{-\frac{1}{2}} =  2(\I + \H)^{-1} - \I, 
\end{equation}
where $\H$ is given by \eqref{eq:defH}. However, we have already seen that $\sigma(\H) \subset [0,1]$, and hence our claim follows from \eqref{eq:sim2}.

Finally, note that $\fix(\V) =\fix(\W) = \mathrm{span}({\e})$. Moreover, $\e \notin \fix(\F_s)$, since $\F_s \e=\e$ would give $\A\e =\ze$, violating Prop.~\ref{prop:RNP}. Thus, we have $\fix(\M) \cap \fix(\N)=\{\ze\}$, and it follows from  Lemma~\ref{MN-lemma-1} that $\|\J_s\|_\D < 1$. 
\end{proof}

\begin{remark}
\label{rmk:invertibleW}
It can be shown that $\K$, and consequently $\W$, is nonsingular for NLM and DSG-NLM \cite[Thm.~2.16]{pravinThesis}. Thus, $0 \notin \sigma(\W)$ and $\sigma(\V) \subset (-1,1]$, where the latter is used in the proof of Thm.~\ref{thm:pnpadmmctr}. However, we can show that Thm.~\ref{thm:pnpadmmctr} is valid even when $\W$ is singular. We choose to work with this assumption since the kernel denoisers we employ are invertible, and the analysis is simplified by assuming invertibility.
\end{remark}
\begin{remark}
    {We note that scaled PnP does not change the loss function in~\eqref{eq:loss}, which arises from the linear measurement model and the Gaussian noise assumption. The only modification is in the regularization component, which is symmetrized by introducing the $\D$-inner product as discussed above.}
\end{remark}

\section{Contraction Factor}
\label{sec:boundctr}

We have shown that the update operator in \textbm{PnP-ISTA} and \textbm{PnP-ADMM} is contractive. However, this is only a qualitative result, and ideally, we would like to derive a formula for the contraction factor. Unfortunately, deriving an exact formula is difficult, so we calculate an upper bound instead. This will allow us to understand how the convergence rate is influenced by factors such as the sampling rate, the spectral gap of $\W$, and the algorithmic parameters $\gamma$ and $\rho$. As we will see later, the contraction factor (computed numerically) is close to $1$. Consequently, obtaining a bound that is $<1$ becomes technically challenging. First, we present the following quantitative form of Lemma~\ref{MN-lemma-1} (see Appendix~\ref{pf:MN-lemma-2} for the proof).

\begin{lemma}
\label{MN-lemma-2}
Let $\M$ and $\N$ be self-adjoint on $(\Re^n,\langle \cdot, \cdot \rangle)$ and $\sigma(\M), \sigma(\N) \subset (-1,1]$.  Let the eigenvalues of $\M$ be
\begin{equation}
\label{order-lambda}
|\lambda_1| \geqslant |\lambda_2| \geqslant \cdots \geqslant |\lambda_n|,
\end{equation}
and let $\q_1$ be a unit eigenvector corresponding to $\lambda_1$. Then
\begin{equation}
\label{eq:MNbound}
\|\M \N\|^2 \leqslant (\lambda_1^2-\lambda_2^2) \,\|\N\q_1\|^2 + \lambda_2^2.
\end{equation}
\end{lemma}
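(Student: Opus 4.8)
The plan is to diagonalize $\M$ and expand the squared output norm against its eigenbasis, then absorb everything except the leading eigendirection into the second-largest squared eigenvalue. Since $\M$ is self-adjoint on $(\Re^n,\langle\cdot,\cdot\rangle)$, it has an orthonormal eigenbasis $\q_1,\ldots,\q_n$ with $\M\q_i=\lambda_i\q_i$, and the relevant norm is the one induced by $\langle\cdot,\cdot\rangle$. For a unit vector $\x$, expanding $\N\x$ in this basis gives
\[
\|\M\N\x\|^2=\sum_{i=1}^{n}\lambda_i^2\,\langle \N\x,\q_i\rangle^2 .
\]

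First I would peel off the $i=1$ term and use the magnitude ordering \eqref{order-lambda}, which guarantees $\lambda_i^2\leq\lambda_2^2$ for every $i\geq2$, to estimate the tail:
\[
\|\M\N\x\|^2\leq \lambda_1^2\,\langle \N\x,\q_1\rangle^2+\lambda_2^2\sum_{i=2}^{n}\langle \N\x,\q_i\rangle^2 .
\]
Orthonormality of the basis lets me rewrite $\sum_{i\geq2}\langle \N\x,\q_i\rangle^2=\|\N\x\|^2-\langle \N\x,\q_1\rangle^2$, and regrouping yields
\[
\|\M\N\x\|^2\leq(\lambda_1^2-\lambda_2^2)\,\langle \N\x,\q_1\rangle^2+\lambda_2^2\,\|\N\x\|^2 .
\]

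Next I would bound the two surviving quantities. Using that $\N$ is self-adjoint, $\langle \N\x,\q_1\rangle=\langle \x,\N\q_1\rangle$; since $\lambda_1^2-\lambda_2^2\geq0$ by the ordering, Cauchy--Schwarz gives $\langle \x,\N\q_1\rangle^2\leq\|\x\|^2\|\N\q_1\|^2=\|\N\q_1\|^2$ for unit $\x$. For the second term, the hypothesis $\sigma(\N)\subset(-1,1]$ together with self-adjointness of $\N$ yields $\|\N\|\leq1$, hence $\|\N\x\|^2\leq1$. Substituting both estimates produces a bound independent of $\x$, so taking the supremum over unit vectors delivers \eqref{eq:MNbound}.

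This is really just careful spectral bookkeeping, so I do not anticipate a deep obstacle. The two points that need care are that the eigenvalues here are ordered by magnitude rather than by signed value, so the tail estimate $\lambda_i^2\leq\lambda_2^2$ must be read off \eqref{order-lambda} rather than from a monotone ordering; and that the coefficient $\lambda_1^2-\lambda_2^2$ is nonnegative, which is exactly what makes the Cauchy--Schwarz step an \emph{upper} bound rather than a lower one. No simultaneous attainment of the two sub-bounds is required, since each holds for every fixed $\x$ before the supremum is taken.
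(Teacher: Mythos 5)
Your proposal is correct and follows essentially the same route as the paper's proof: expand $\|\M\N\x\|^2$ in the orthonormal eigenbasis of $\M$, split off the $i=1$ term, regroup using $\sum_{i\geq 2}\langle\N\x,\q_i\rangle^2 = \|\N\x\|^2 - \langle\N\x,\q_1\rangle^2$, and finish with self-adjointness of $\N$, Cauchy--Schwarz, and $\|\N\|\leq 1$. Your explicit remark that $\lambda_1^2-\lambda_2^2\geq 0$ is what keeps the Cauchy--Schwarz step an upper bound is a correct and worthwhile point of care, though the paper leaves it implicit.
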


 We can use Lemma~\ref{MN-lemma-2} to bound the contraction factor. Given \eqref{eq:MNbound}, we just need to bound the quantity $\|\N\q_1\|^2$, where $\N$ is $\G_s$ for \textbm{PnP-ISTA} and $\F_s$ \textbm{PnP-ADMM}. In particular, the \textbm{RNP} property of $\A$ (Prop.~\ref{prop:RNP}) ensures that $\|\G_s \e\| < 1$ and $\|\F_s \e\| < 1$ (Prop.~\ref{prop:ctr-fix}).

\begin{remark}
\label{rmk:MNswitch}
The bound in \eqref{eq:MNbound} does not change if we switch $\M$ and $\N$. This is because $\norm{\N\M}=\norm{(\N\M)^*}=\norm{\M^*\N^*}=\norm{\M\N}$, where $\A^*$ denotes the adjoint of $\A$. 
\end{remark}

\subsection{Contraction factor for \textbm{ISTA}}

We get similar bounds for \textbm{PnP-ISTA} and \textbm{Sc-PnP-ISTA}. We present the analysis for the latter, as it is more intricate due to the presence of the $\D^{-1}$ factor in $\G_s$.

\begin{proposition}
\label{prop:normGe}
Let $\gamma \in (0,2)$ and $\q_1=\e/\|\e\|_{\D}$. Then
\begin{equation}
\label{eq:boundGse}
\|\G_s \q_1\|^2_\D \leqslant 1- \frac{1}{ \| \D \|_2} \gamma ( 2 - \gamma) \left(\frac{1}{n} \| \A\e \|_2^2\right).
\end{equation}
\end{proposition}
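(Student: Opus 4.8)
The plan is to work directly with $\e$ and normalize only at the end, exploiting that $\q_1 = \e/\|\e\|_{\D}$ gives $\|\G_s\q_1\|_{\D}^2 = \|\G_s\e\|_{\D}^2 / \|\e\|_{\D}^2$. First I would record the elementary identity for the denominator, $\|\e\|_{\D}^2 = \e^\top\D\e = \mathrm{tr}(\D) = \sum_i \D_{ii}$, which uses $\langle\x,\y\rangle_{\D} = \x^\top\D\y$ from \eqref{D-ip-norm} and the fact that $\D$ is diagonal.

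Next I would expand the numerator. Writing $\G_s\e = \e - \gamma\D^{-1}\A^\top\A\e$ from \eqref{defPs} and setting $\w = \A^\top\A\e$, a direct expansion of $(\G_s\e)^\top\D(\G_s\e)$ shows the $\D$ coming from the inner product cancels the $\D^{-1}$ factors cleanly:
\[
\|\G_s\e\|_{\D}^2 = \e^\top\D\e - 2\gamma\,\e^\top\w + \gamma^2\,\w^\top\D^{-1}\w.
\]
Here $\e^\top\w = \e^\top\A^\top\A\e = \|\A\e\|_2^2$ is the middle (linear-in-$\gamma$) term, and the last term is $\w^\top\D^{-1}\w = \|\invsqrtD\A^\top\A\e\|_2^2$. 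So the numerator is a quadratic in $\gamma$ with linear coefficient $\|\A\e\|_2^2$.

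The crux, and the step I expect to be the main (though mild) obstacle, is to bound the $\gamma^2$ term from above by $\|\A\e\|_2^2$; this is where the $\D^{-1}$ weight and the bound $\|\A\|_2 \leqslant 1$ interact. Setting $\u = \A\e$, I would rewrite $\|\invsqrtD\A^\top\A\e\|_2^2 = \u^\top\A\D^{-1}\A^\top\u$. Since $\D_{ii}\geqslant 1$ by Assumption~\ref{assumpK}, we have $\D^{-1}\preceq\I$, hence $\A\D^{-1}\A^\top\preceq\A\A^\top$; and since $\|\A\|_2\leqslant 1$ (Remark~\ref{rmk:normA}), $\A\A^\top\preceq\I$ as an operator on $\Re^m$. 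Combining gives $\u^\top\A\D^{-1}\A^\top\u \leqslant \u^\top\u = \|\A\e\|_2^2$. Because $\gamma^2>0$, this upper bound on the last term yields a lower bound on the subtracted quantity, so
\[
\|\G_s\e\|_{\D}^2 \leqslant \e^\top\D\e - 2\gamma\|\A\e\|_2^2 + \gamma^2\|\A\e\|_2^2 = \e^\top\D\e - \gamma(2-\gamma)\|\A\e\|_2^2,
\]
where $\gamma(2-\gamma)>0$ for $\gamma\in(0,2)$.

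Finally I would divide by $\e^\top\D\e$ and use the crude but sufficient bound $\mathrm{tr}(\D)=\sum_i\D_{ii}\leqslant n\max_i\D_{ii}=n\|\D\|_2$, valid because $\D$ is diagonal and positive. This gives
\[
\|\G_s\q_1\|_{\D}^2 = \frac{\|\G_s\e\|_{\D}^2}{\e^\top\D\e} \leqslant 1 - \frac{\gamma(2-\gamma)\|\A\e\|_2^2}{n\|\D\|_2},
\]
which is exactly \eqref{eq:boundGse}. The only places requiring care are the clean cancellation of $\D$ against $\D^{-1}$ in the expansion and the semidefinite ordering chain $\A\D^{-1}\A^\top\preceq\A\A^\top\preceq\I$; the rest is routine.
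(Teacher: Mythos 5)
Your proof is correct and follows essentially the same route as the paper's: expand the quadratic in $\gamma$, identify the cross term as $\|\A\e\|_2^2$, bound the $\gamma^2$ term by $\|\A\e\|_2^2$ using $\D_{ii}\geqslant 1$ and $\|\A\|_2\leqslant 1$, and finish with $\|\e\|_{\D}^2\leqslant n\|\D\|_2$. The only cosmetic difference is that you phrase the key estimate via the semidefinite ordering $\A\D^{-1}\A^\top\preceq\A\A^\top\preceq\I$, whereas the paper writes it as $\|\D^{-\frac{1}{2}}\A^\top(\A\e)\|_2\leqslant\|\A\e\|_2$ using operator-norm submultiplicativity.
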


\begin{proof} From the definition of $\G_s$, 
\begin{align}
\label{eq:exp}
\|\G_s \e\|_\D^2 = \| \e-\gamma\D^{-1}\A^\top\!\A\e\|_\D^2.
  \end{align} 
Expanding the RHS, we get
\begin{align*}
\norm{\e}_\D^2 + \gamma^2\norm{\D^{-1}\A^\top\!\A\e}^2_\D  - 2\gamma\inner{\e}{\D^{-1}\A^\top\!\A\e}_\D.
  \end{align*} 
 Now,  $  \inner{\e}{\D^{-1}\A^\top\!\A\e}_\D = \norm{\A\e}_2^2$. As for the third term, since $\norm{\A^\top}_2\leq 1$ (Remark~\ref{rmk:normA}) and $\|\z\|_{\D}=\|\D^{1/2} \z\|_2$, we have
\begin{equation*}
\| \D^{-1}\A^\top\!\A\e\|_\D = \|\D^{-\frac{1}{2}}\A^\top\!( \A\e)\|_2 \leq \|\A\e\|_2.
\end{equation*}
Thus, we can bound \eqref{eq:exp} by $\norm{\e}_\D^2 - \gamma(2-\gamma) \|\A\e\|^2_2$. Finally, we arrive at \eqref{eq:boundGse} by noting that $\| \e\|^2_\D  \leq n\norm{\D}_2$.
\end{proof}

\begin{remark}
\label{rmk:normAe}
For the inverse problems at hand, $\| \A\e \|^2_2$ in \eqref{eq:boundGse} has a precise meaning. For inpainting, $ \| \A\e \|_2^2 =  \mu n$, where $\mu$ is the fraction of observed pixels. On the other hand, $ \| \A\e \|_2^2 = \|\e\|_2^2=  n$ for deblurring, and $ \| \A\e \|_2^2 = \|\S\e\|_2^2 =\mu n$ for superresolution, where $\mu = m/n$ is the subsampling rate. 
\end{remark}

We now bound the contraction factor of the update operator in \textbm{Sc-PnP-ISTA}.

\begin{theorem}
\label{thm:boundISTA}
Let $\W$ be a kernel denoiser and $\gamma \in (0,2)$. Then, for inpainting and superresolution, 
\begin{equation*}
\|\P_s\|^2_{\D} \leqslant  \lambda_2^2 + (1-\lambda_2^2) \Big(1- \frac{1}{ \| \D \|_2} \gamma (2 - \gamma)\mu \Big),
\end{equation*}
and, for deblurring,  
\begin{equation*}
\|\P_s\|^2_{\D} \leqslant  \lambda_2^2 + (1-\lambda_2^2) \Big(1- \frac{1}{ \| \D \|_2} \gamma (2 - \gamma) \Big).
\end{equation*}
\end{theorem}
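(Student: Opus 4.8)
The plan is to read the bound off directly from the quantitative contraction estimate in Lemma~\ref{MN-lemma-2}, applied to the factorization $\P_s = \W\G_s$ from \eqref{defPs}, and then to substitute the scalar estimate on $\|\G_s\q_1\|_\D$ supplied by Proposition~\ref{prop:normGe}. Concretely, I would take $\M = \W$ and $\N = \G_s$ and work in the inner-product space $(\Re^n, \langle \cdot, \cdot \rangle_{\D})$. The hypotheses of Lemma~\ref{MN-lemma-2}—that $\M$ and $\N$ are self-adjoint and that $\sigma(\M), \sigma(\N) \subset (-1,1]$—have already been verified in the proof of Theorem~\ref{thm:pnpsistactr}, so nothing new is needed on that front.

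The one point requiring care is matching the eigenvalue ordering of Lemma~\ref{MN-lemma-2}, which is by absolute value, to the eigenvalues of $\W$ listed in Proposition~\ref{propertiesW}. Since $\sigma(\W) \subset [0,1]$, ordering by absolute value coincides with ordering by value; hence the $\lambda_1$ of the lemma is the Perron eigenvalue $1$, and its $\lambda_2$ is the second-largest eigenvalue of $\W$, which is exactly the $\lambda_2$ appearing in the theorem statement. The corresponding unit eigenvector is $\q_1 = \e/\|\e\|_\D$, because $\W\e = \e$. With these identifications $\lambda_1^2 - \lambda_2^2 = 1 - \lambda_2^2$, and Lemma~\ref{MN-lemma-2} gives
\begin{equation*}
\|\P_s\|^2_{\D} = \|\W\G_s\|^2_{\D} \leqslant (1 - \lambda_2^2)\,\|\G_s\q_1\|^2_{\D} + \lambda_2^2.
\end{equation*}

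It then remains to insert Proposition~\ref{prop:normGe}, which bounds $\|\G_s\q_1\|^2_{\D}$ by $1 - \|\D\|_2^{-1}\gamma(2-\gamma)\big(\tfrac{1}{n}\|\A\e\|_2^2\big)$, and to evaluate the factor $\tfrac{1}{n}\|\A\e\|_2^2$ via Remark~\ref{rmk:normAe}. For inpainting and superresolution this factor equals the sampling rate $\mu$, whereas for deblurring it equals $1$. Substituting each value into the displayed inequality and collecting terms produces the two stated bounds.

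I do not expect a genuine obstacle: the heavy lifting sits inside Lemma~\ref{MN-lemma-2} and Proposition~\ref{prop:normGe}, and the rest is assembly. The only place to stay attentive is the spectral bookkeeping described above—one must confirm that the nonnegativity of $\sigma(\W)$ lets the second eigenvalue of $\W$ serve as $\lambda_2$ in the lemma, and that the unit eigenvector $\q_1 = \e/\|\e\|_\D$ used in Proposition~\ref{prop:normGe} is precisely the top eigenvector of $\W$, so that the two results compose cleanly.
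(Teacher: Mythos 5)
Your proposal matches the paper's own proof essentially verbatim: both apply Lemma~\ref{MN-lemma-2} with $\M=\W$, $\N=\G_s$ and $\q_1=\e/\|\e\|_\D$, then substitute Proposition~\ref{prop:normGe} and Remark~\ref{rmk:normAe}. Your extra remark on why ordering eigenvalues by absolute value coincides with the ordering in Proposition~\ref{propertiesW} (since $\sigma(\W)\subset[0,1]$) is a correct and welcome piece of bookkeeping that the paper leaves implicit.
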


\begin{proof}
It is evident that $\M=\W$ and $\N=\G_s$ satisfy the assumptions in Lemma~\ref{MN-lemma-2}. In particular, $\lambda_1=1$ and $\q_1 = \e/\|\e\|_{\D}$ for denoiser $\W$. Thus, \eqref{eq:MNbound} becomes
\begin{equation*}
\|\P_s\|_\D^2 \leqslant (1-\lambda_2^2) \,\|\G_s \q_1\|^2 + \lambda_2^2.
\end{equation*}
The desired bounds follow from  Prop.~\ref{prop:normGe} and Remark~\ref{rmk:normAe}.
\end{proof}

\begin{remark}
The bounds in Lemma~\ref{MN-lemma-2} might not always be $<1$. However, the design of the kernel denoisers guarantees that the spectral gap $1 - \lambda_2$ remains positive, ensuring that the bound in Thm.~\ref{thm:boundISTA} is $<1$. For \textbm{ADMM}, we require an additional assumption of invertibility for the denoiser $\W$; this is because it is possible that $\zeta_* 
=-1$ if $\W$ is singular. However, as discussed in Thm.~\ref{thm:pnpadmmctr}, this assumption is
justified, and we obtain a bound strictly less than one for both \textbm{PnP-ADMM} and \textbm{Sc-PnP-ADMM}.
\end{remark}

\begin{remark}
We can get the bound for \textbm{PnP-ISTA} by setting $\D=\I$ in Thm.~\ref{thm:boundISTA}. This gives  
\begin{equation}
\label{bnd1}
\|\P\|^2_2 \leqslant \lambda_2^2 + (1-\lambda_2^2) (1- \gamma)^2
\end{equation}
for deblurring, and
\begin{equation}
\label{bnd2}
\|\P\|^2_2 \leqslant \lambda_2^2 + (1-\lambda_2^2) (1- \gamma (2 - \gamma)\mu)
\end{equation}
for inpainting and superresolution.
\end{remark}

We note that \eqref{bnd1} and \eqref{bnd2} are smaller than the respective bounds for a kernel denoiser.
We can obtain \eqref{bnd1} and \eqref{bnd2} directly from Lemma~\ref{MN-lemma-2} by substituting $\M$ with a symmetric denoiser. Indeed, the operator in this case is $\P$ defined in \eqref{defP}. Setting $\N=\G$ in \eqref{eq:MNbound}, we need to bound $\|\G\q_1\|$, where $\q_1= \e/\sqrt{n}$ is the top eigenvector of $\W$. For deblurring, since $\A\e=\A^\top\! \e=\e$, we have $\G\e=(1-\gamma)\e$, leading to \eqref{bnd1}. Similarly, $\G$ is diagonal for inpainting with elements in $\{1,1-\gamma\}$, and a direct calculation gives \eqref{bnd2}.

\subsection{Contraction factor for  \textbm{ADMM}}

Following \eqref{defRs} and \eqref{def:FV}, we will use $\J, \F, \V$ and $\R$ for \textbm{PnP-ADMM}. As observed earlier in Thm.~\ref{thm:pnpadmmctr}, contractivity of $\R_s$ (resp.~$\R$) follows from the contractivity $\J_s$ (resp.~$\J$). In particular, 
\begin{equation*}
\|\R\|_2 \leqslant \frac{1}{2} \big( 1+\|\J\|_2 \big) \ \mbox{ and } \ \|\R_s\|_\D \leqslant \frac{1}{2} \big( 1+\|\J_s\|_\D \big).
\end{equation*}

Thus, we will focus on bounding $\|\J\|_2$ and $\|\J_s\|_\D$. Bounding $\|\J\|_2$ is particularly easy for inpainting ($\A^\top\! \A$ is diagonal) and deblurring ($\A^\top\! \A\e=\e$). A direct application of Lemma~\ref{MN-lemma-2} on $\J$ requires $\F\e$, which has the following closed-form expression:
\begin{equation}
\label{eq:Fe}
\F\e =
 \begin{cases}
     \e - 2\rho/(1+\rho) \, \diag(\A) & \quad  \text{(inpainting)}, \\
     (1-\rho)/(1+\rho) \e & \quad  \text{(deblurring)},
\end{cases}
\end{equation}
This makes it easy to compute $\|\F\e\|_2$ in Lemma~\ref{MN-lemma-2}.

Recall that the bound for \textbm{PnP-ISTA} involves the spectral gap of $\W$. However, we have $\V$ instead of $\W$ in \textbm{PnP-ADMM}, and $\V$ can have negative eigenvalues. The counterpart of the spectral gap that comes up naturally in this case is
\begin{equation}
\label{def:zeta}
 \zeta_* =\max_{\xi \in \sigma(\V), \, \xi \neq 1} \ |\xi| = \max_{\lambda \in \sigma(\W), \, \lambda \neq 1} \ |2\lambda-1|.
\end{equation}

\begin{figure*}[t]
    \centering
    \subfloat[original]{
    \includegraphics[width=0.16\linewidth]{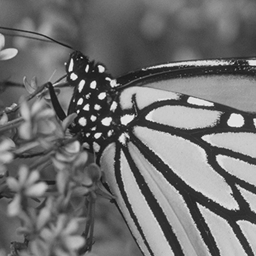}}\hfill
    \subfloat[input]{
    \includegraphics[width=0.16\linewidth]{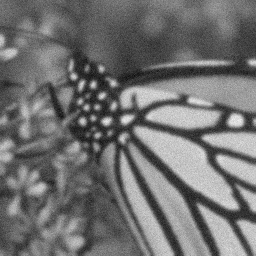}}\hfill
    \subfloat[\textbm{PnP-ISTA}]{
    \includegraphics[width=0.16\linewidth]{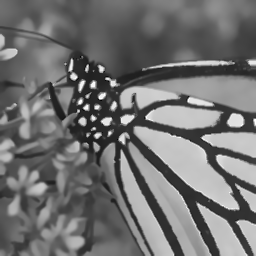}}\hfill
     \subfloat[\textbm{PnP-ISTA} with NLM]{
    \includegraphics[width=0.16\linewidth]{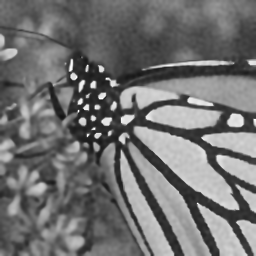}}\hfill
      \subfloat[\textbm{Sc-PnP-ISTA}]{
    \includegraphics[width=0.16\linewidth]{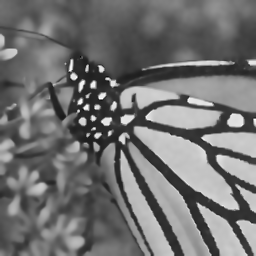}}\hfill
    \caption{Deblurring results using \textbm{PnP-ISTA} with a symmetric DSG-NLM denoiser and a nonsymmetric NLM denoiser, and \textbm{Sc-PnP-ISTA} with a nonsymmetric NLM denoiser. The input image is blurred with a Gaussian kernel of size $25\times 25$ and a standard deviation of $1.6$, followed by the addition of $3\%$ white Gaussian noise. The PSNR values are (b) $22.85$ dB, (c) $29.39$ dB, (d) $28.93$ dB and (e) $28.81$ dB.}
    \label{fig:deblurring}
\end{figure*}

\begin{figure*}[t]
    \centering
    \subfloat[original]{
    \includegraphics[width=0.16\linewidth]{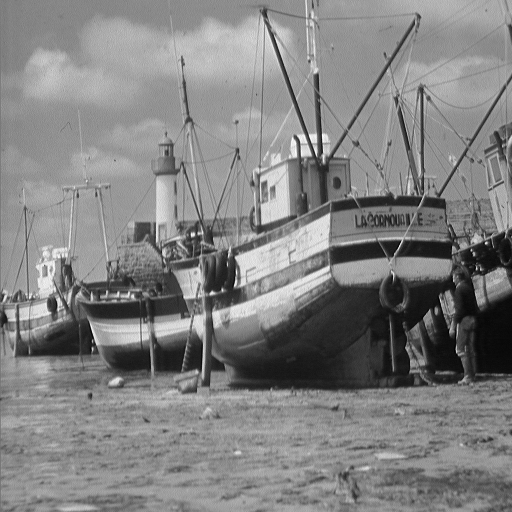}}\hfill
    \subfloat[input]{
    \includegraphics[width=0.16\linewidth]{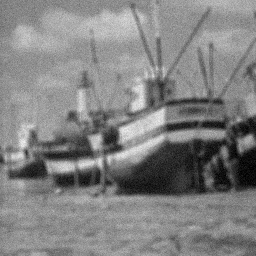}}\hfill
    \subfloat[\textbm{PnP-ISTA}]{
    \includegraphics[width=0.16\linewidth]{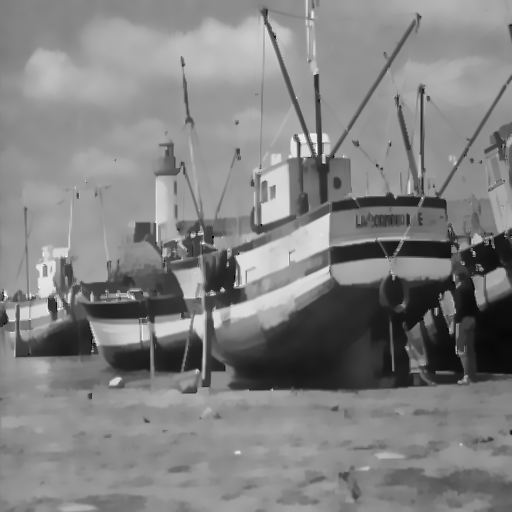}}\hfill
     \subfloat[\textbm{PnP-ISTA} with {NLM}]{
    \includegraphics[width=0.16\linewidth]{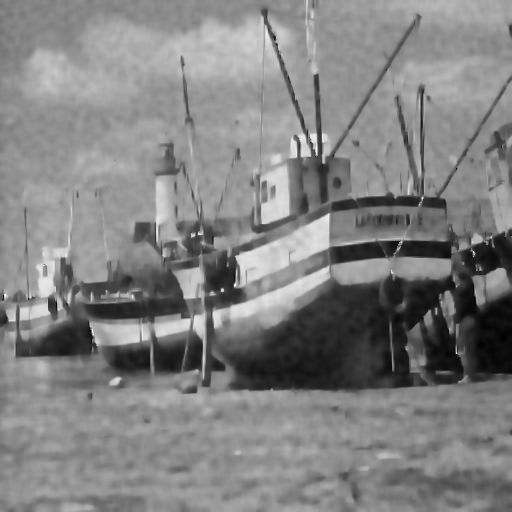}}\hfill
     \subfloat[\textbm{Sc-PnP-ISTA}]{
    \includegraphics[width=0.16\linewidth]{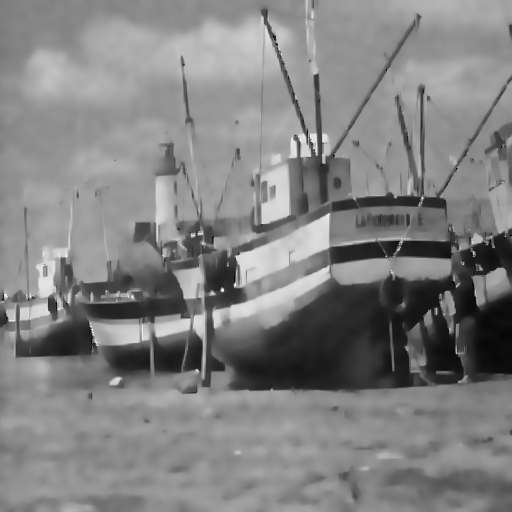}}\hfill
    \caption{$2\times$ superresolution using \textbm{PnP-ISTA} and \textbm{Sc-PnP-ISTA}, with DSG-NLM and NLM denoisers respectively. The input is generated by applying a uniform blur with a $9 \times 9$ kernel, followed by $2\times$ downsampling and adding $3\%$ white Gaussian noise. The PSNR values are (c) $27.31$ dB, (d) $26.67$ dB, and (d) $26.49$ dB.  For reference, the PSNR from bicubic interpolation is $22.80$ dB.}
    \label{fig:superresolution}
\end{figure*}

\begin{theorem}
\label{thm:inp-deblur-sym}
Let $\W$ be a symmetric denoiser and $\rho >0$. Then
\begin{equation}
\label{eq:deblur-bound-J}
    \norm{\J}_2^2 \leq \zeta_*^2 + (1 - \zeta_*^2)\bigg(\frac{1-\rho}{1+\rho}\bigg)^2
\end{equation}
for deblurring, and 
\begin{equation}
\label{eq:inp-bound-J}
    \norm{\J}_2^2 \leq \zeta_*^2 + (1 - \zeta_*^2)\bigg(1-\frac{4\mu\rho}{(1+\rho)^2}\bigg)
\end{equation}
for inpainting, where $\mu$ is the fraction of observed pixels. 
\end{theorem}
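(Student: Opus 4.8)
The plan is to bound $\norm{\J}_2 = \norm{\F\V}_2$ by a single application of Lemma~\ref{MN-lemma-2}, with the roles chosen so that the spectral data come from $\V$ and the vector data come from $\F$. Since $\norm{\F\V}_2 = \norm{\V\F}_2$ by Remark~\ref{rmk:MNswitch}, I would invoke the lemma with $\M = \V$ and $\N = \F$. Both are self-adjoint on $(\Re^n, \inner{\cdot}{\cdot}_2)$: $\V = 2\W-\I$ is symmetric because $\W$ is a symmetric denoiser, and $\F = 2(\I + \rho\A^\top\!\A)^{-1} - \I$ is symmetric because $\A^\top\!\A$ is. I would then check the spectral hypotheses: $\sigma(\V)\subset(-1,1]$ follows from $\sigma(\W)\subset(0,1]$ (invertibility rules out the value $-1$ for $\V$), and $\sigma(\F)\subset(-1,1]$ follows from $\I + \rho\A^\top\!\A \succeq \I$, which forces the eigenvalues of $(\I + \rho\A^\top\!\A)^{-1}$ into $(0,1]$.

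The next step is to read off the eigenstructure of $\M = \V$. Since $\W\e = \e$ (Prop.~\ref{propertiesW}), we have $\V\e = \e$, so $\lambda_1 = 1$ is the eigenvalue of largest modulus, with unit eigenvector $\q_1 = \e/\sqrt{n}$; and by the definition~\eqref{def:zeta} of $\zeta_*$, the second-largest modulus is exactly $|\lambda_2| = \zeta_*$. Substituting $\lambda_1^2 = 1$ and $\lambda_2^2 = \zeta_*^2$ into \eqref{eq:MNbound} yields
\[
\norm{\J}_2^2 \;\leq\; \zeta_*^2 + (1 - \zeta_*^2)\,\frac{\norm{\F\e}_2^2}{n}.
\]
Everything then reduces to evaluating $\norm{\F\e}_2^2$, for which the closed forms in \eqref{eq:Fe} are available.

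For deblurring, $\F\e = \tfrac{1-\rho}{1+\rho}\,\e$ gives $\norm{\F\e}_2^2/n = \big(\tfrac{1-\rho}{1+\rho}\big)^2$ immediately, producing \eqref{eq:deblur-bound-J}. For inpainting I would use $(\F\e)_i = 1 - \tfrac{2\rho}{1+\rho}\A_{ii}$ with $\A_{ii}\in\{0,1\}$, so that each of the $\mu n$ observed coordinates contributes $\big(\tfrac{1-\rho}{1+\rho}\big)^2$ and each of the remaining $(1-\mu)n$ coordinates contributes $1$; collecting terms and using the identity $(1+\rho)^2 - (1-\rho)^2 = 4\rho$ gives
\[
\frac{\norm{\F\e}_2^2}{n} \;=\; (1-\mu) + \mu\Big(\frac{1-\rho}{1+\rho}\Big)^2 \;=\; 1 - \frac{4\mu\rho}{(1+\rho)^2},
\]
which is exactly \eqref{eq:inp-bound-J}.

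The conceptually important step is the correct assignment of roles in Lemma~\ref{MN-lemma-2}: the bound must be anchored on the spectrum of $\V$ (whence $\zeta_*$) while the vector term is $\norm{\F\q_1}$, and this is legitimate only because Remark~\ref{rmk:MNswitch} lets us replace $\F\V$ by $\V\F$. The one genuine point of care is verifying $\sigma(\V)\subset(-1,1]$, which is precisely where the invertibility of $\W$ enters; the residual evaluation of $\norm{\F\e}_2^2$ is routine once \eqref{eq:Fe} is in hand.
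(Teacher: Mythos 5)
Your proposal is correct and follows essentially the same route as the paper: apply Lemma~\ref{MN-lemma-2} with $\M=\V$ and $\N=\F$ (using Remark~\ref{rmk:MNswitch} to pass from $\F\V$ to $\V\F$), identify $\lambda_1=1$, $\q_1=\e/\sqrt{n}$, $|\lambda_2|=\zeta_*$, and evaluate $\norm{\F\e}_2^2$ from \eqref{eq:Fe} separately for deblurring and inpainting. Your explicit verification of $\sigma(\V),\sigma(\F)\subset(-1,1]$ is a welcome addition that the paper leaves to the surrounding discussion.
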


\begin{proof}
We will use Lemma~\ref{MN-lemma-2} with $\M=\V$ and $\N=\F=2(\I+\rho \A^\top \!\A)^{-1}-\I$. These are self-adjoint on $(\Re^n,\langle \cdot, \cdot \rangle_2)$. In particular, $|\lambda_1|=1, |\lambda_2|= \zeta_*$, and  $\q_1=\e/\|\e\|_2$. For deblurring, we have from \eqref{eq:Fe} that
\begin{equation*}
\|\F\e\|_2^2= \left( \frac{1-\rho}{1+\rho} \right)^2 \|\e\|_2^2.
\end{equation*}
Substituting $\|\F\q_1\|_2^2$ in Lemma~\ref{MN-lemma-2} and noting that $\J=\M\N$ (also see Remark~\ref{rmk:MNswitch}), we get \eqref{eq:deblur-bound-J}.

For inpainting, let $\Omega=\{i: \A_{ii}=1\}$ be the observed pixels, where $|\Omega|=\mu n$. We have from \eqref{eq:Fe} that
\begin{align*}
\|\F\e\|_2^2  &= \sum_{i \in \Omega} \, (1-\rho)^2/(1+\rho)^2 +  \sum_{i \notin \Omega}\, 1 \\
& = n -  |\Omega| \left( 1 - (1-\rho)^2/(1+\rho)^2 \right) \\
& = \left(1 - 4 \rho \mu/(1+\rho)^2 \right) \|\e\|_2^2.
\end{align*}
Substituting $\|\F\q_1\|_2^2$ in Lemma~\ref{MN-lemma-2}, we get \eqref{eq:inp-bound-J}.
\end{proof}

Unfortunately, we do not have a simple expression for $\F\e$ similar to \eqref{eq:Fe} for superresolution. The situation becomes even more complicated for \textbm{Sc-PnP-ADMM}, where $\F_s$ in  \eqref{def:FV} has an additional $\D^{-1}$ term. Specifically, there is no straightforward expression for $\F_s\e$ in deblurring and superresolution. However, $\F_s$ is diagonal for inpainting, allowing us to easily bound $|\F_s\e|_\D$.

\begin{theorem}
\label{thm:inp-bound-Js}
Let $\W$ be a kernel denoiser. Then, for inpainting with $\mu$ fraction of observed pixels, we have
\begin{equation}
\label{eq:inp-bound-Js}
\norm{\J_s}_\D^2 \leq \zeta_*^2 + (1-\zeta_*^2)\left( 1 -  (1-\theta^2) \frac{\mu}{\|\D\|_2} \right),
\end{equation}
where $\theta := \max_{i}\, (1-\rho \D_{ii}^{-1})/(1+\rho \D_{ii}^{-1}) <1$.
\end{theorem}

The derivation of \eqref{eq:inp-bound-Js} is similar to \eqref{eq:inp-bound-J} and is outlined in Appendix~\ref{pf:inp-bound-Js}.

We will use the following observation to bound $\F_s\e$ for deblurring and superresolution. 

\begin{figure*}[t]
    \centering
    \subfloat[Gaussian deblurring]{
    \includegraphics[width=0.42\linewidth]{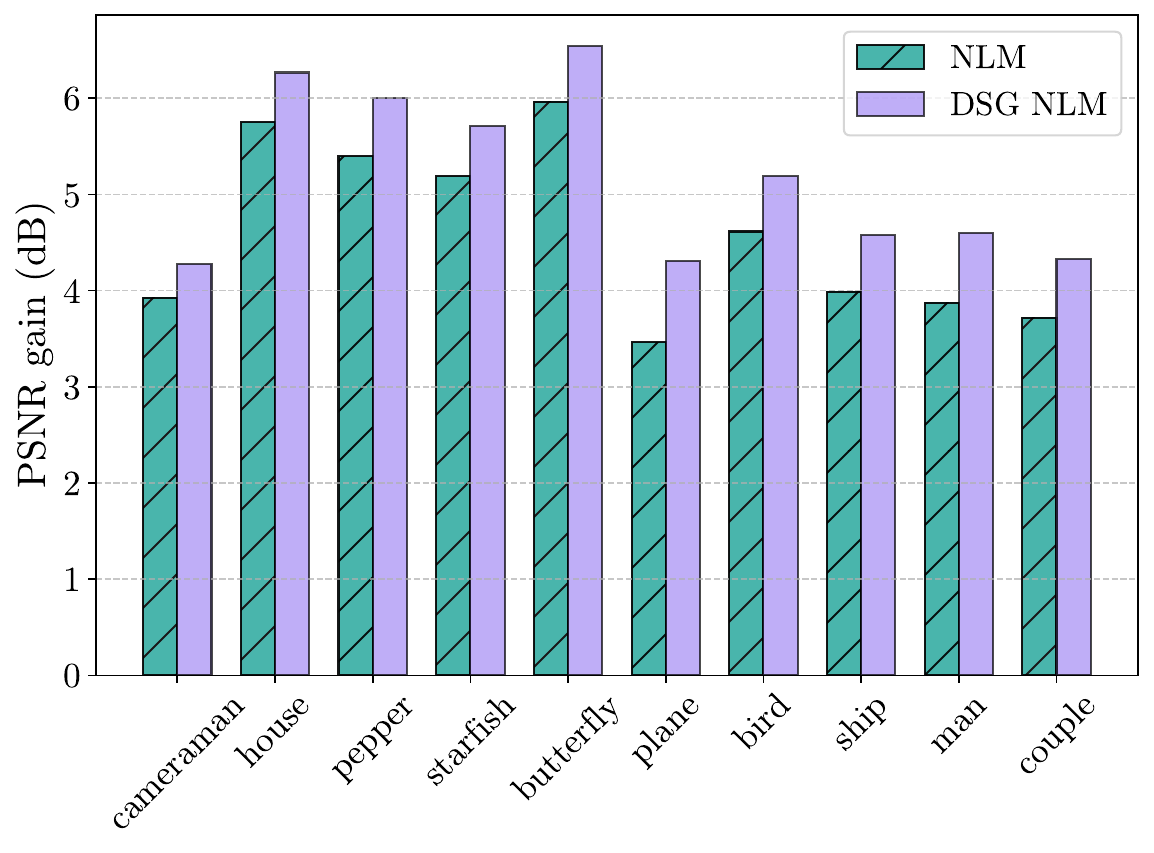}}
    \hfill
    \subfloat[Superresolution]{
    \includegraphics[width=0.42\linewidth]{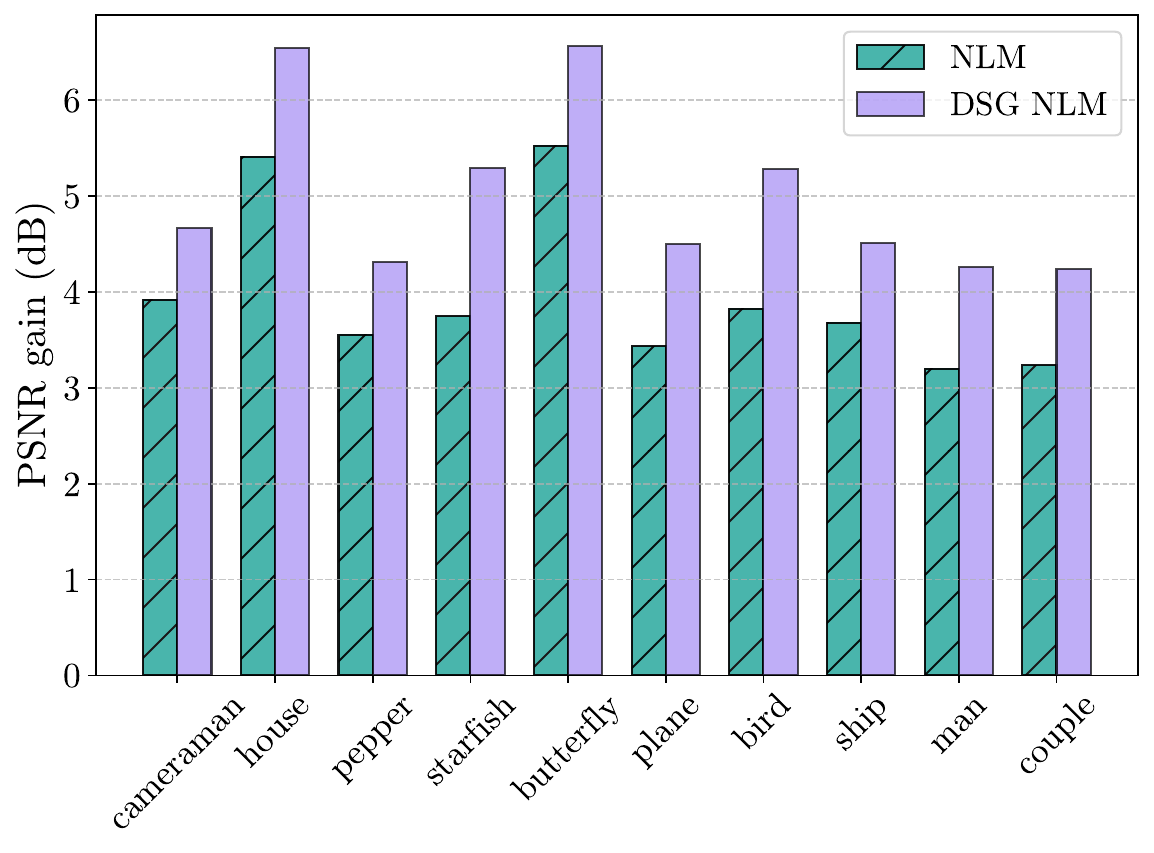}
    }
    \caption{We compare the reconstruction quality obtained using \textbm{ISTA} with symmetric DSG-NLM and nonsymmetric NLM denoisers. The results are averaged over images from the Set12 dataset. For superresolution, the PSNR gain is computed with respect to the bicubic interpolation of the observed image. Overall, the symmetric denoiser DSG-NLM outperforms the nonsymmetric NLM denoiser, although it requires more computations~\cite{sreehari2016plug}.}
    \label{fig:bar_plot_symm_vs_kernel}
\end{figure*}

\begin{proposition}
\label{prop:normHx}
Let $\H$ be a self-adjoint operator on $(\Re^n, \langle \cdot, \cdot \rangle)$ with $\sigma(\H) \subset [-1,1]$. Let $\xi$ be an eigenvalue of $\H$ with unit eigenvector $\w$. Then
\begin{equation*}
\forall \, \x \in \Re^n: \quad \|\H\x\|^2 \leqslant \|\x\|^2 - (1-\xi^2) \langle \w,  \x \rangle^2. 
\end{equation*}
\end{proposition}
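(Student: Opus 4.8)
The plan is to diagonalize $\H$ in an orthonormal eigenbasis that is chosen to contain $\w$, expand $\x$ in that basis, and then bound $\|\H\x\|^2$ termwise using the spectral constraint $\sigma(\H) \subset [-1,1]$. The whole argument is a short spectral computation; the only place asking for mild care is the arrangement of the eigenbasis.

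First I would invoke the spectral theorem: since $\H$ is self-adjoint on the finite-dimensional inner-product space $(\Re^n, \langle \cdot, \cdot \rangle)$, there is an orthonormal basis of eigenvectors of $\H$. Because $\w$ is a unit eigenvector with eigenvalue $\xi$, this basis can be arranged so that $\q_1 = \w$: extend $\w$ to an orthonormal basis of its own eigenspace, and adjoin orthonormal eigenbases of the remaining eigenspaces. Let $\mu_1 = \xi, \mu_2, \ldots, \mu_n$ be the corresponding eigenvalues, all lying in $[-1,1]$ by hypothesis.

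Next I would expand $\x = \sum_{i=1}^n c_i \q_i$ with $c_i = \langle \q_i, \x \rangle$. Orthonormality gives $\|\x\|^2 = \sum_{i=1}^n c_i^2$ and, since $\q_1 = \w$, the identity $c_1 = \langle \w, \x \rangle$. Applying $\H\q_i = \mu_i \q_i$ yields $\H\x = \sum_{i=1}^n \mu_i c_i \q_i$, so $\|\H\x\|^2 = \sum_{i=1}^n \mu_i^2 c_i^2$. Isolating the first term and using $\mu_i^2 \leqslant 1$ for $i \geqslant 2$,
$$\|\H\x\|^2 = \xi^2 c_1^2 + \sum_{i=2}^n \mu_i^2 c_i^2 \leqslant \xi^2 c_1^2 + \sum_{i=2}^n c_i^2 = \|\x\|^2 - (1-\xi^2) c_1^2,$$
which is exactly the asserted bound once $c_1 = \langle \w, \x \rangle$ is substituted back. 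There is no substantive obstacle in this proof; it is a direct consequence of the eigendecomposition, and the sole subtlety is ensuring that $\w$ can be taken as a basis vector, which is automatic because $\w$ sits inside a single eigenspace of the self-adjoint operator $\H$.
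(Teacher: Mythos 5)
Your proof is correct and is precisely the argument the paper has in mind: the paper skips the proof, stating only that it ``follows directly from the eigendecomposition of $\H$,'' which is exactly the spectral expansion you carry out. Your extra care in noting that $\w$ can be taken as a basis vector because it lies in a single eigenspace is a valid and worthwhile detail.
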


We skip the proof of Prop.~\ref{prop:normHx}. This follows directly from the eigendecomposition of $\H$. 

\begin{theorem}
\label{thm:inp-sr-Js}
Let $\W$ be a kernel denoiser. Then 
\begin{equation*}
\norm{\J_s}_\D^2 \leq \zeta_*^2 + (1-\zeta_*^2)\left( 1 -   \frac{1}{n\|\D\|^2_2}\cdot \frac{4 \rho}{(1+\rho)^2} \right)
\end{equation*}
for deblurring, and 
\begin{equation*}
\norm{\J_s}_\D^2 \leq \zeta_*^2 + (1-\zeta_*^2)\left( 1 -   \frac{\mu}{n\|\D\|^2_2}\cdot \frac{4 \rho}{(1+\rho)^2} \right)
\end{equation*}
for superresolution, where $\mu$ is the subsampling rate.
\end{theorem}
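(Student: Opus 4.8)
The plan is to run the same two-step strategy used for the symmetric case (Theorem~\ref{thm:inp-deblur-sym}) and for inpainting (Theorem~\ref{thm:inp-bound-Js}): first collapse $\|\J_s\|_\D$ to a statement about a single vector via Lemma~\ref{MN-lemma-2}, then estimate how strongly $\F_s$ shrinks that vector. Concretely, I would take $\M = \V$ and $\N = \F_s$, both self-adjoint on $(\Re^n, \langle\cdot,\cdot\rangle_\D)$, and note $\J_s = \F_s\V = \N\M$, so by Remark~\ref{rmk:MNswitch} it suffices to bound $\|\M\N\|_\D$. Since $\V\e = \e$, the dominant eigenvalue of $\V$ is $1$ with $\D$-unit eigenvector $\q_1 = \e/\|\e\|_\D$, and the next-largest magnitude is $\zeta_*$ by \eqref{def:zeta}. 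Lemma~\ref{MN-lemma-2} then gives $\|\J_s\|_\D^2 \leq \zeta_*^2 + (1-\zeta_*^2)\|\F_s\q_1\|_\D^2$, reducing both claims to showing $\|\F_s\e\|_\D^2 / \|\e\|_\D^2 \leq 1 - \delta$ with $\delta$ the stated shrinkage term.

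For the shrinkage estimate I would exploit that $\F_s$ is a spectral function of $\D^{-1}\A^\top\!\A$, which is self-adjoint on $(\Re^n,\langle\cdot,\cdot\rangle_\D)$, similar to $\H$ of \eqref{eq:defH}, and has spectrum in $[0,1]$. In a $\D$-orthonormal eigenbasis $\{\w_i\}$ with eigenvalues $\eta_i\in[0,1]$, the operator $\F_s$ acts by $f(\eta_i) = (1-\rho\eta_i)/(1+\rho\eta_i)$, so $\sigma(\F_s)\subset(-1,1]$ and Proposition~\ref{prop:normHx} applies. The essential point is the computable identity $\langle\e,\D^{-1}\A^\top\!\A\,\e\rangle_\D = \e^\top\!\A^\top\!\A\e = \|\A\e\|_2^2$, which equals $n$ for deblurring and $\mu n$ for superresolution by Remark~\ref{rmk:normAe}. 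Writing this as $\|\A\e\|_2^2 = \sum_i \eta_i\langle\w_i,\e\rangle_\D^2$, an averaging argument produces an index $i^\star$ with $\eta_{i^\star}\langle\w_{i^\star},\e\rangle_\D^2 \geq \|\A\e\|_2^2/n$; this is what certifies that $\e$ has a guaranteed component in the contracting part of the spectrum of $\F_s$.

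I would then apply Proposition~\ref{prop:normHx} to $\F_s$ with $\w = \w_{i^\star}$ and $\xi = f(\eta_{i^\star})$ to get $\|\F_s\e\|_\D^2 \leq \|\e\|_\D^2 - (1-f(\eta_{i^\star})^2)\langle\w_{i^\star},\e\rangle_\D^2$. The elementary bound $1 - f(\eta)^2 = 4\rho\eta/(1+\rho\eta)^2 \geq 4\rho\eta/(1+\rho)^2$ for $\eta\in[0,1]$ converts the eigenvalue factor into $\eta_{i^\star}$, and combining with the averaging estimate yields $\|\F_s\e\|_\D^2 \leq \|\e\|_\D^2 - \tfrac{4\rho}{(1+\rho)^2}\cdot\tfrac{\|\A\e\|_2^2}{n}$. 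Dividing by $\|\e\|_\D^2$, inserting $\|\A\e\|_2^2 = n$ or $\mu n$, and using $\|\e\|_\D^2 = \sum_i\D_{ii}\leq n\|\D\|_2$ together with $\|\D\|_2\geq 1$ gives the two stated bounds (the power $\|\D\|_2^2$ in the theorem is harmless slack relative to this estimate).

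The main obstacle is precisely what distinguishes this case from inpainting: for deblurring and superresolution $\A^\top\!\A$ neither fixes $\e$ nor commutes with $\D$, so $\e$ is not an eigenvector of $\F_s$ and $\F_s\e$ has no closed form, in contrast to the diagonal $\F_s$ in Theorem~\ref{thm:inp-bound-Js}. The difficulty is therefore to quantify the overlap of $\e$ with the eigenvectors of $\F_s$ carrying $\eta_i>0$ without diagonalizing $\F_s$ explicitly; the identity $\langle\e,\D^{-1}\A^\top\!\A\,\e\rangle_\D = \|\A\e\|_2^2$ together with the averaging step is the device that sidesteps this, and the remaining work is only bookkeeping of the $n$ and $\|\D\|_2$ factors so that, thanks to the spectral gap $1-\zeta_*^2>0$, the final constant is provably below $1$.
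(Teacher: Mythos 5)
Your proposal is correct and follows the same two-step skeleton as the paper (Lemma~\ref{MN-lemma-2} with $\M=\V$, $\N=\F_s$, then Prop.~\ref{prop:normHx} applied to $\F_s\e$), but the decisive quantitative step is handled by a genuinely different device. The paper applies Prop.~\ref{prop:normHx} with the \emph{Perron} eigenvector $\w$ of $\E=\D^{-1}\A^\top\!\A$ (the one attaining $\alpha=\|\E\|_\D$): it then needs the nonnegativity of $\A$ and a Perron--Frobenius argument to get $w_i\geq 0$, whence $\langle\w,\e\rangle_\D=\sum_i\D_{ii}w_i\geq\sum_i\D_{ii}w_i^2=1$, together with the Rayleigh-quotient bound $\alpha\geq\|\A\e\|_2^2/\|\e\|_\D^2$; the product of these two lower bounds is where the second factor of $\|\D\|_2$ enters. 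You instead expand $\|\A\e\|_2^2=\langle\e,\E\e\rangle_\D=\sum_i\eta_i\langle\w_i,\e\rangle_\D^2$ over the full $\D$-orthonormal eigenbasis and use pigeonhole to extract one index with $\eta_{i^\star}\langle\w_{i^\star},\e\rangle_\D^2\geq\|\A\e\|_2^2/n$, which combines with $1-f(\eta)^2\geq 4\rho\eta/(1+\rho)^2$ to give the shrinkage directly. Your route avoids the nonnegativity/Perron argument entirely, is arguably cleaner, and yields the slightly sharper constant $1/(n\|\D\|_2)$ in place of the paper's $1/(n\|\D\|_2^2)$, which (since $\|\D\|_2\geq 1$) implies the stated bounds; the paper's route, by isolating the top eigenvalue $\alpha$ and the quantity $\langle\w,\e\rangle_\D$, keeps the two sources of slack separately visible but pays an extra factor of $\|\D\|_2$ for it.
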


\begin{proof}
We can write 
\begin{equation}
\label{def:Fs-B}
\F_s =  2(\I+ \rho \E)^{-1} - \I,  \quad \E:=\D^{-1} \A^\top \! \A.
\end{equation}
Note that $\E$ is self-adjoint on 
$(\Re^n, \langle \cdot, \cdot \rangle_\D)$ and its eigenvalues are nonnegative. In particular, 
$\alpha:=\|\E\|_\D$ is an eigenvalue of $\E$, i.e., there exists $\| \w \|_\D=1$ such that $\E \w = \alpha\, \w$. We also need the fact that $\alpha \leqslant 1$. This is because
\begin{align*}
\|\E\|_\D &= \| \D^{\frac{1}{2}} \E \D^{-\frac{1}{2}} \|_2 \leqslant  \| \D^{-\frac{1}{2}} \|_2^2 \ |\A \|^2_2 \leqslant 1.
\end{align*}
From \eqref{def:Fs-B}, we have $\F_s \w = \xi \, \w$, where $\xi = (1-\rho \alpha)/(1+\rho \alpha)$. Applying Prop.~\ref{prop:normHx} with $\H=\F_s$ and $\x=\e$, we get
\begin{equation}
\label{eq:temp1}
\|\F_s \e \|^2_\D \leqslant \|\e\|_\D^2 - \frac{4\rho \alpha}{(1+\rho \alpha)^2} \langle \w,  \e \rangle_\D^2.
\end{equation}
Recall that we need to bound $\|\F_s \q_1 \|^2_\D$, where $\q_1=\e/\|\e\|_\D^2$. Since $\|\e\|_\D^2 \leqslant n \|\D\|_2$ and $\alpha \leqslant 1$, it follows from \eqref{eq:temp1} that
\begin{equation*}
\|\F_s \q_1 \|^2_\D \leqslant 1 -  \frac{1}{n\|\D\|_2}\cdot \frac{4\rho \alpha}{ (1+\rho)^2} \langle \w,  \e \rangle_\D^2.
\end{equation*}
Since $\A$ (and hence $\E$) has nonnegative components (Assumption~\ref{assumpA}), we can show $w_i \geqslant 0$, where $\w=(w_1,\ldots,w_n)$; see  \cite[Sec.~8.3]{meyer2000matrix} for example. Moreover, as $\D_{ii} \geqslant 1$ and $\| \w \|^2_\D=1$, we have $0 \leqslant w_i \leqslant 1$. Hence,
\begin{equation*}
\langle \w,  \e \rangle_\D =  \sum_{i=1}^n \D_{ii} w_i \geqslant \sum_{i=1}^n \D_{ii} w_i^2 =1.
\end{equation*}
On the other hand,
\begin{align*}
\alpha &= \|\D^{-1} \A^\top \! \A\|_\D   \geqslant \frac{\langle \e, \D^{-1} \A^\top \! \A \e \rangle_\D}{\|\e \|^2_\D} =  \frac{\|\A\e\|_2^2}{\|\e \|^2_\D}
\end{align*}
Consequently, we have from Remark~\ref{rmk:normAe} that
\begin{equation*}
\| \F\q_1\|_\D^2 \leqslant 
 \begin{cases}
   \displaystyle   1 -  \frac{1}{n\|\D\|^2_2}\, \frac{4\rho }{ (1+\rho)^2} & \quad  \text{(deblurring)}, \\
    \displaystyle     1 -  \frac{1}{n\|\D\|^2_2}\, \frac{4\rho \mu}{ (1+\rho)^2}& \quad  \text{(superresolution)}.
\end{cases}
\end{equation*}
Combining this with Lemma~\ref{MN-lemma-2}, we get the desired bounds.
\end{proof}

We note that the bounds in Thm.~\ref{thm:inp-deblur-sym} are not recovered if we set $\D=\I$ in Thm.~\ref{thm:inp-sr-Js}. The extra $1/n$ factor is an artifact of our analysis and might be avoidable.

\subsection{Discussion}

We remark that the bounds in Lemma~\ref{MN-lemma-2} need not be strictly less than one. However, as the spectral gap $1-\lambda_2$ is positive for a kernel denoiser, the bound in Thm. \ref{thm:boundISTA} is guaranteed to be $<1$. For \textbm{ADMM}, we assumed that the denoiser is invertible, which forces the spectral gap $\zeta_*$ in \eqref{def:zeta} to be $<1$. However, as discussed in Remark \ref{rmk:invertibleW}, NLM and DSG-NLM denoisers can be proven to be invertible, and the bound is $<1$ for both \textbm{PnP-ADMM} and \textbm{Sc-PnP-ADMM}.

Our analysis reveals that the contraction factor decreases for both inpainting and superresolution as \(\mu\) increases. This leads to the intuitive conclusion that convergence is faster with more measurements. It is also worth noting
that, for \textbm{PnP-ISTA}, the bound depends on $\W$ through its second-largest eigenvalue
$\lambda_2$. If we view the kernel matrix $\K$ as the adjacency matrix of a graph with the pixels as nodes, then $\lambda_2$ (the spectral gap) represents the connectivity of the graph, indicating how quickly pixels are mixed (or diffused) through repeated applications of $\K$ \cite{milanfar2013tour}. Better connectivity results in a smaller $\lambda_2$ and faster convergence. As shown in Figs.~\ref{fig:thm5-lambda2-effect} and \ref{fig:zeta-effect} in the next section, reducing \(\lambda_2\) and \(|\zeta_*|\) by increasing the bandwidth \(h\) can accelerate convergence. However, while a higher bandwidth improves mixing and speeds up convergence, it can also introduce excessive blurring in the reconstruction (Fig. \ref{fig:convergence-plot}). Thus, there is an optimal tradeoff between convergence speed and reconstruction accuracy.

\begin{figure*}[h]
    \centering
    \subfloat[\textbm{PnP-ISTA}]{
    \includegraphics[width=0.42\linewidth]{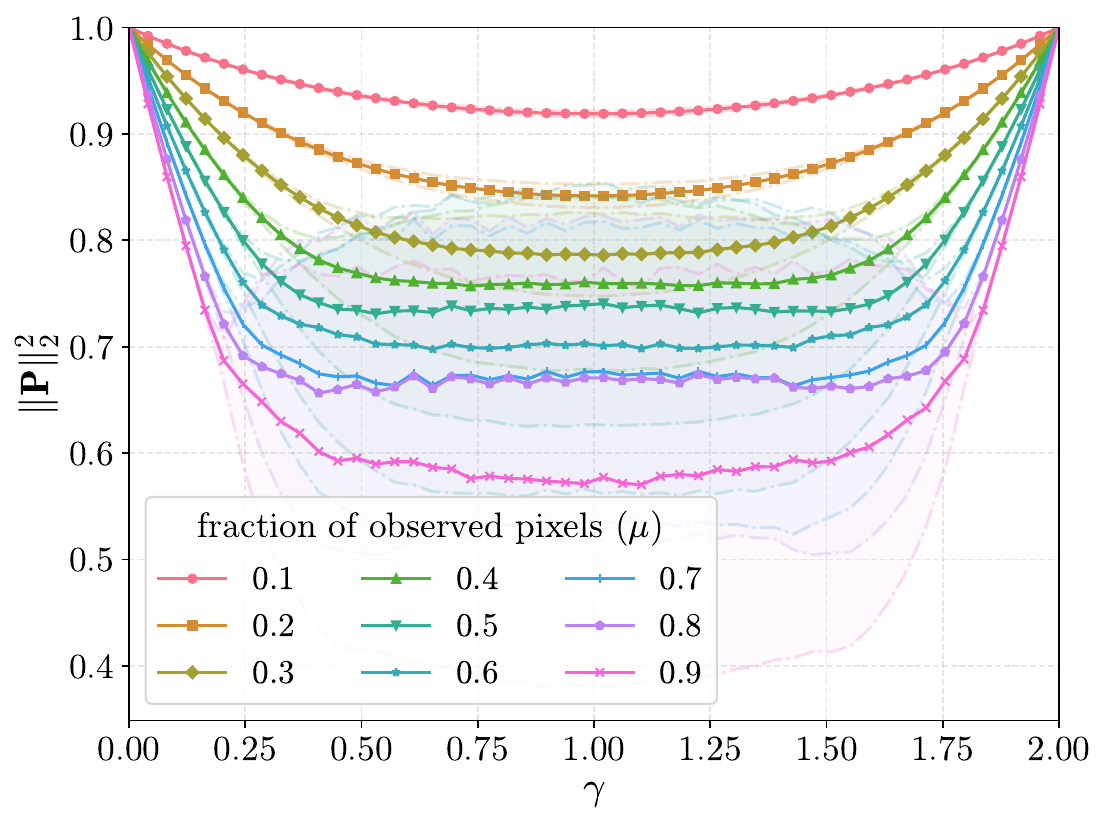}}\hspace{0.07\linewidth}
    \subfloat[\textbm{Sc-PnP-ISTA}]{\includegraphics[width=0.42\linewidth]{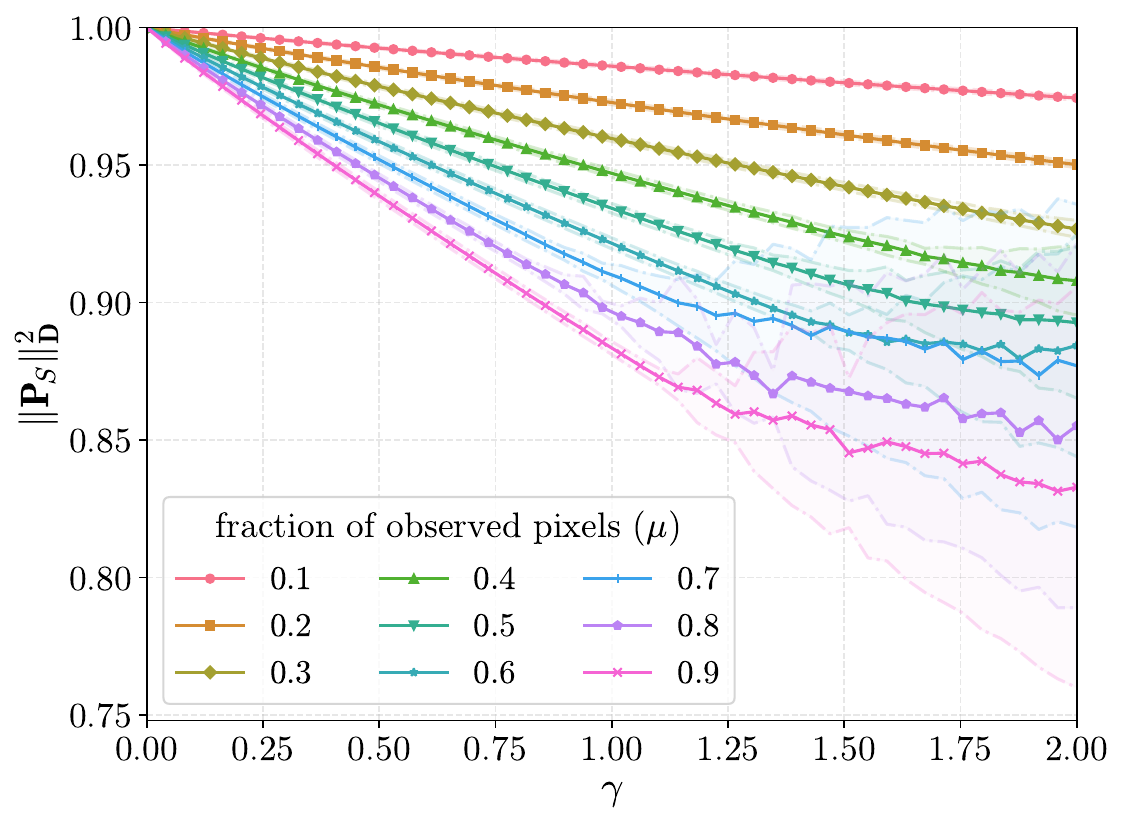}}
    \caption{We report the contraction factors of the update operators $\P$ and $\P_s$ in \textbm{PnP-ISTA} and \textbm{Sc-PnP-ISTA} for image inpainting. The contraction factors are shown for different fractions of observed pixels $\mu$ and for varying step-sizes $\gamma$. We note that the above operators are derived from the denoiser, which in turn is computed from the input image. The plots display the average results across images from Set12, with the shaded region around each plot representing the standard deviation. The results align with the predictions of Thm.~\ref{thm:boundISTA}, confirming that the contraction factor decreases with an increase in the fraction of observed pixels.}
    \label{fig:thm5-mu-effect}
\end{figure*}

\begin{figure*}
    \centering
    \subfloat[superresolution with \textbm{PnP-ADMM}]{
    \includegraphics[width=0.42\linewidth]{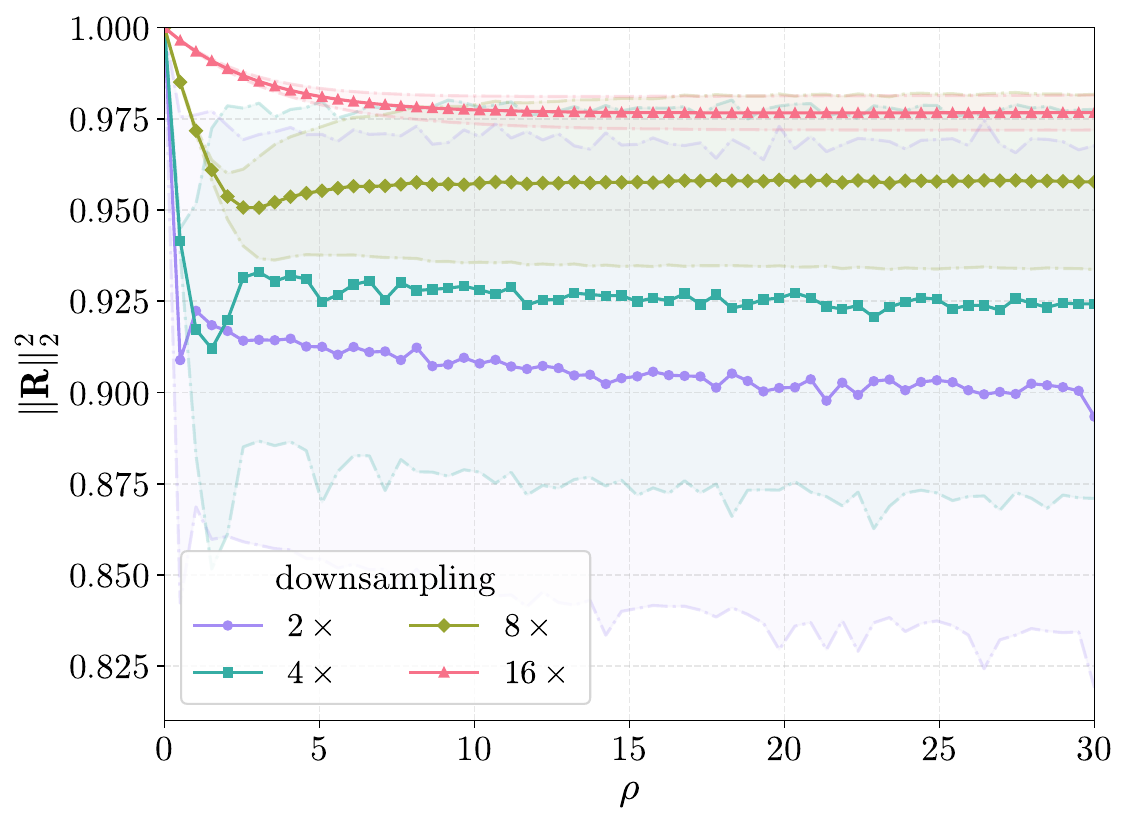}}
    \hspace{0.07\linewidth}
\subfloat[superresolution with \textbm{Sc-PnP-ADMM}]{\includegraphics[width=0.42\linewidth]{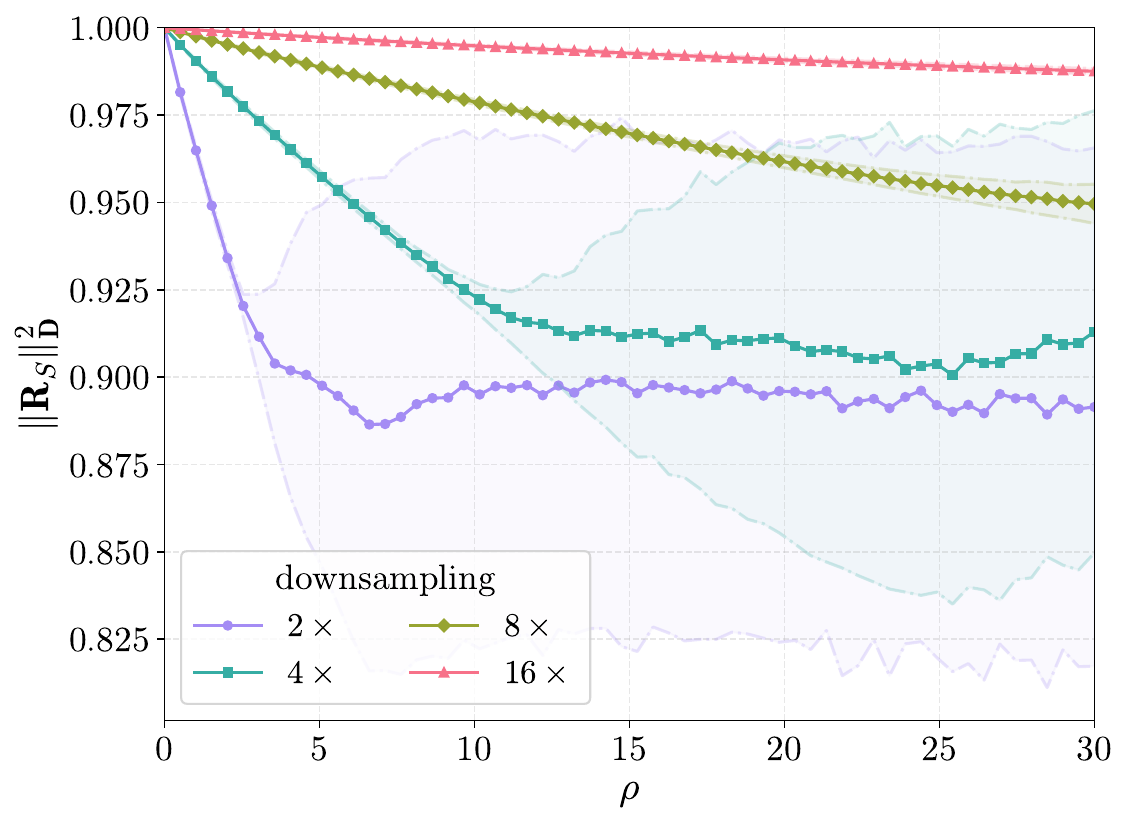}}
    \caption{We study the effect of the downsampling rate (for superresolution) on the contraction factor of the update operators: $\R$ for \textbm{PnP-ADMM} and $\R_s$ for \textbm{Sc-PnP-ADMM}. The bound in Thm. \ref{thm:inp-sr-Js} predicts a reduction in the contraction factor with lower downsampling (i.e., more samples). The results in the plots align well with this prediction. 
}
    \label{fig:thm8-mu-effect}
\end{figure*}

\begin{figure*}[h]
    \centering
    \subfloat[\text{inpainting}]{
    \includegraphics[width=0.42\linewidth]{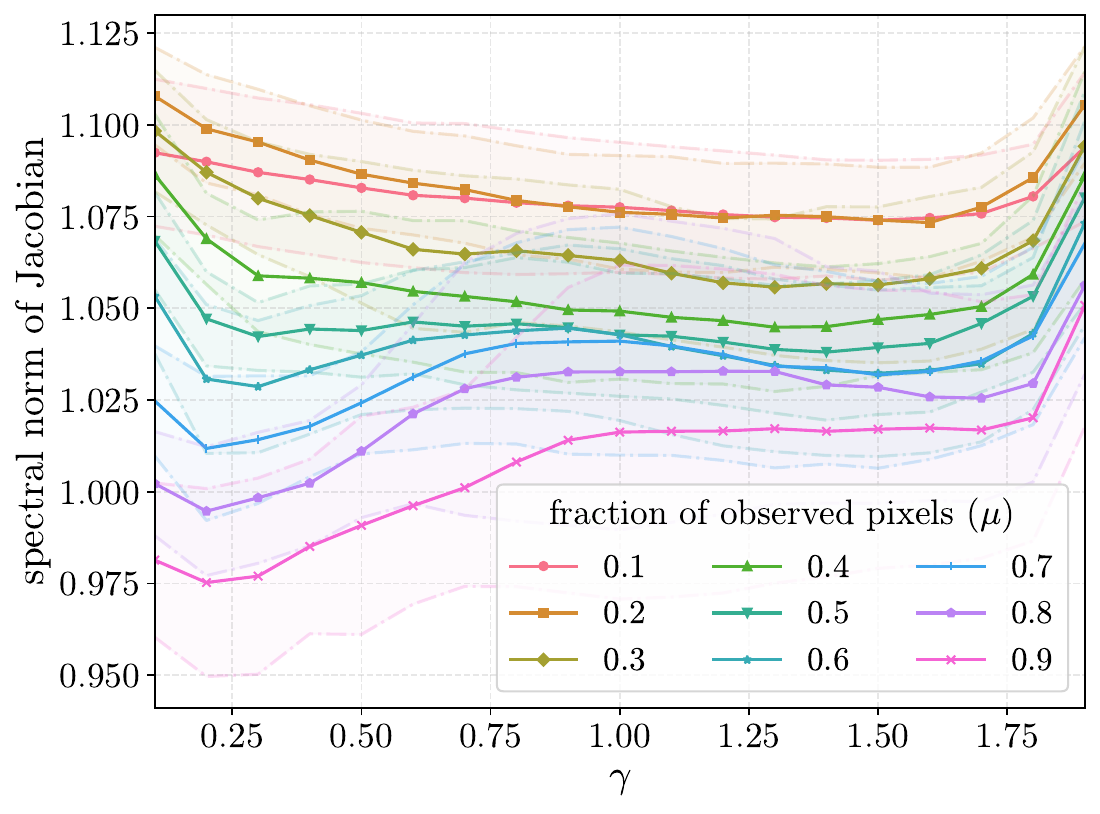}}\hspace{0.07\linewidth}
    \subfloat[\text{superresolution}]{\includegraphics[width=0.42\linewidth]{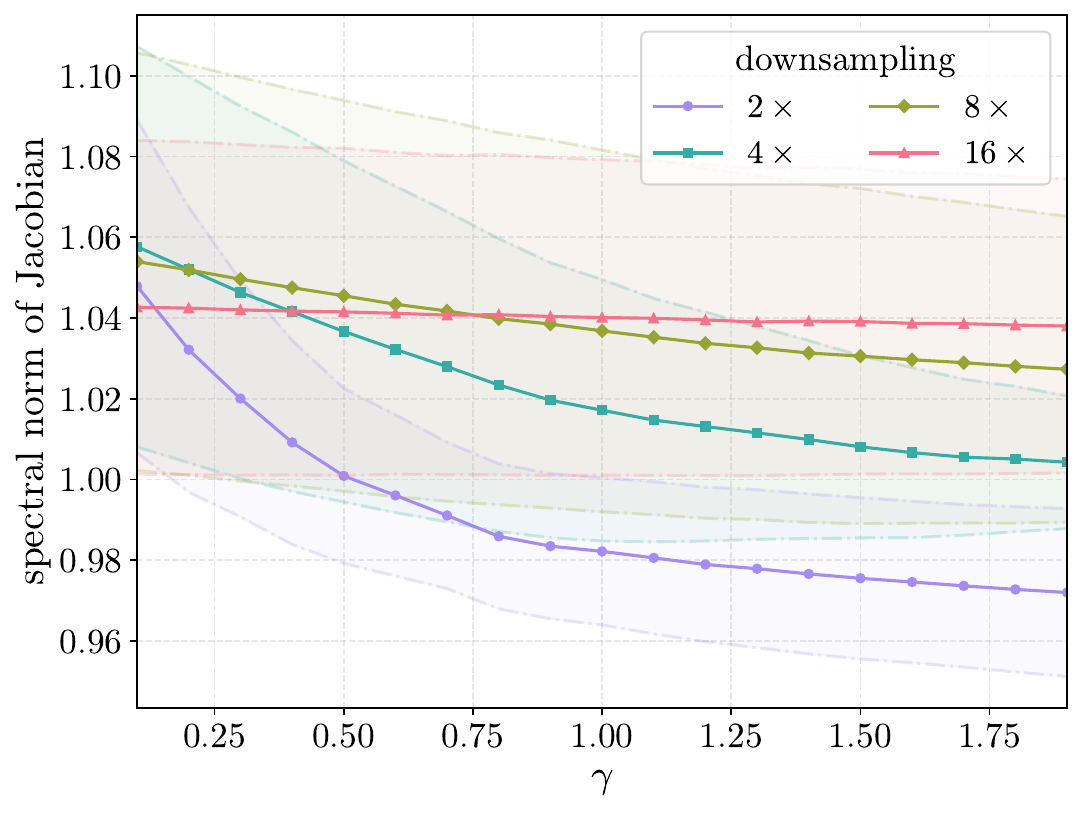}}
    \caption{We empirically validate the predictions of Theorem~\ref{thm:boundISTA} using a variant of the nonlinear DnCNN denoiser~\cite{pesquet_learning_2021}. Specifically, we compute the Jacobian of the update operator in \textbm{PnP-ISTA} at each iterate \(\x_k\), where k ranges from 0 to 100 in increments of 10, and take the maximum spectral norm across all such Jacobians. The plots show the average results across images from Set12. Interestingly, despite the nonlinearity of the denoiser, we observe a pattern similar to that in Fig.~\ref{fig:thm5-mu-effect}.}
    \label{fig:dncnn}
\end{figure*}

\begin{figure*}
\centering
\begin{minipage}[b]{.49\textwidth}
\centering
\includegraphics[width=0.80\linewidth]{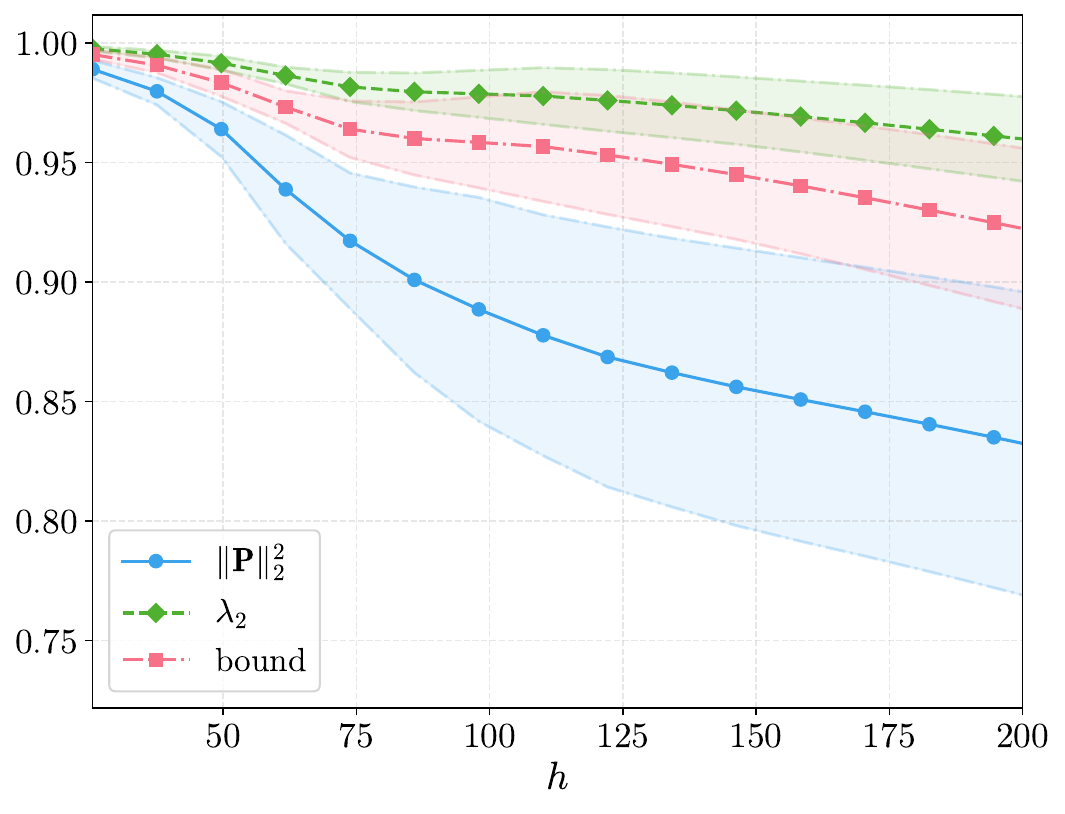}
    \caption{We study the correlation between the contraction factor of the update operator in \textbm{PnP-ISTA} and the second eigenvalue ($\lambda_2$) of the denoiser $\W$. The comparison is done for the Gaussian deblurring setup in Fig.~\ref{fig:deblurring}. We notice that as the bandwidth $h$ in \eqref{phi} increases, $\lambda_2$ decreases (i.e., the spectral gap widens). Notably, the experimental results shown here are consistent with Thm.~\ref{thm:boundISTA}, which predicts a reduction in (the bound on) the contraction factor (see~\eqref{bnd1}) with a decrease in $\lambda_2$, for a fixed step size $\gamma$.}
    \label{fig:thm5-lambda2-effect}
\end{minipage}\hfill
\begin{minipage}[b]{.49\textwidth}
\centering
\includegraphics[width=0.85\linewidth]{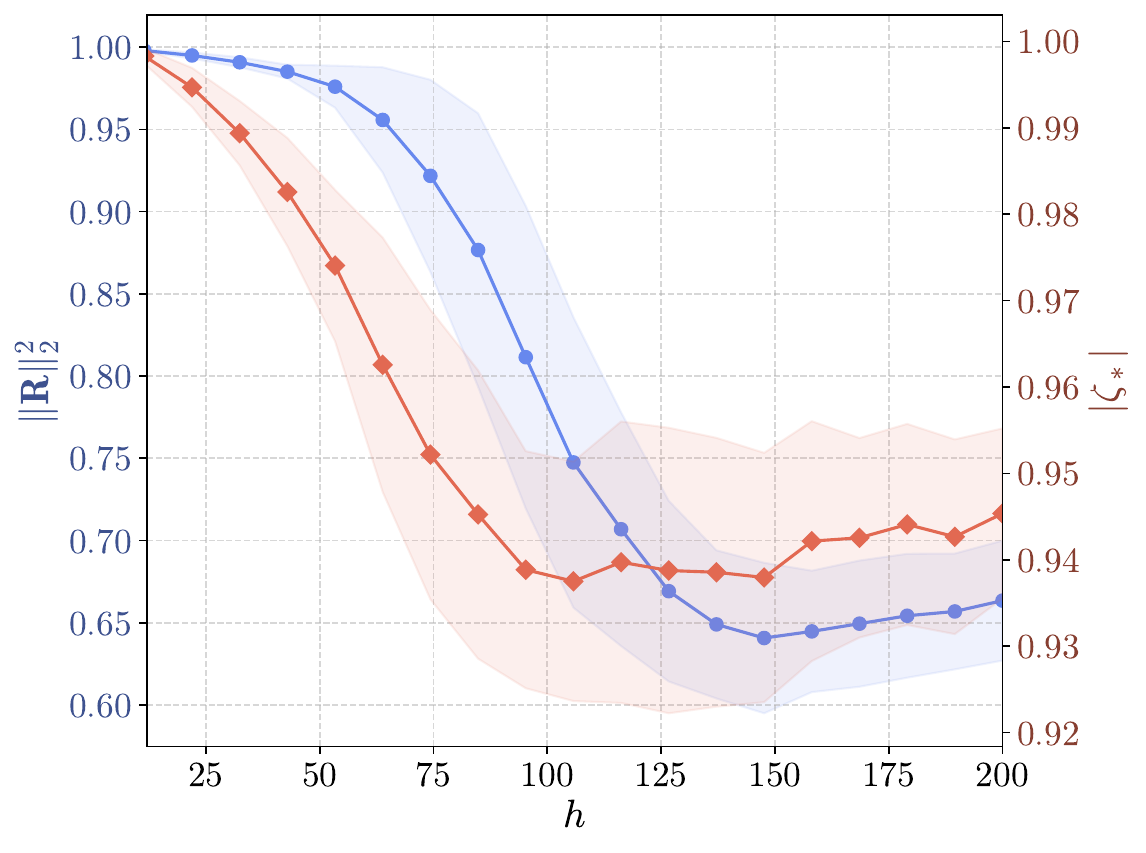}
    \caption{We study the effect of the spectral gap $1-|\zeta_*|$ on the contraction factor of the update operator for \textbm{PnP-ADMM}. The comparison is done for the Gaussian deblurring setup in Fig.~\ref{fig:deblurring}. We plot $\|\R\|_2$ and $\zeta_*$ on the left and right axes. As anticipated from the bounds in Thm. \ref{thm:inp-deblur-sym}, we observe a decrease in $\|\R\|_2$ as $|\zeta_*|$ decreases (i.e., larger spectral gap). As in Fig.~\ref{fig:thm5-lambda2-effect}, a reduction in $|\zeta_*|$, and hence the contraction factor, can generally be achieved by increasing the bandwidth $h$ of the denoiser.}
    \label{fig:zeta-effect}
\end{minipage}
\end{figure*}

\begin{figure*}
    \centering
    \subfloat[iterate convergence]{\includegraphics[width=0.42\linewidth]{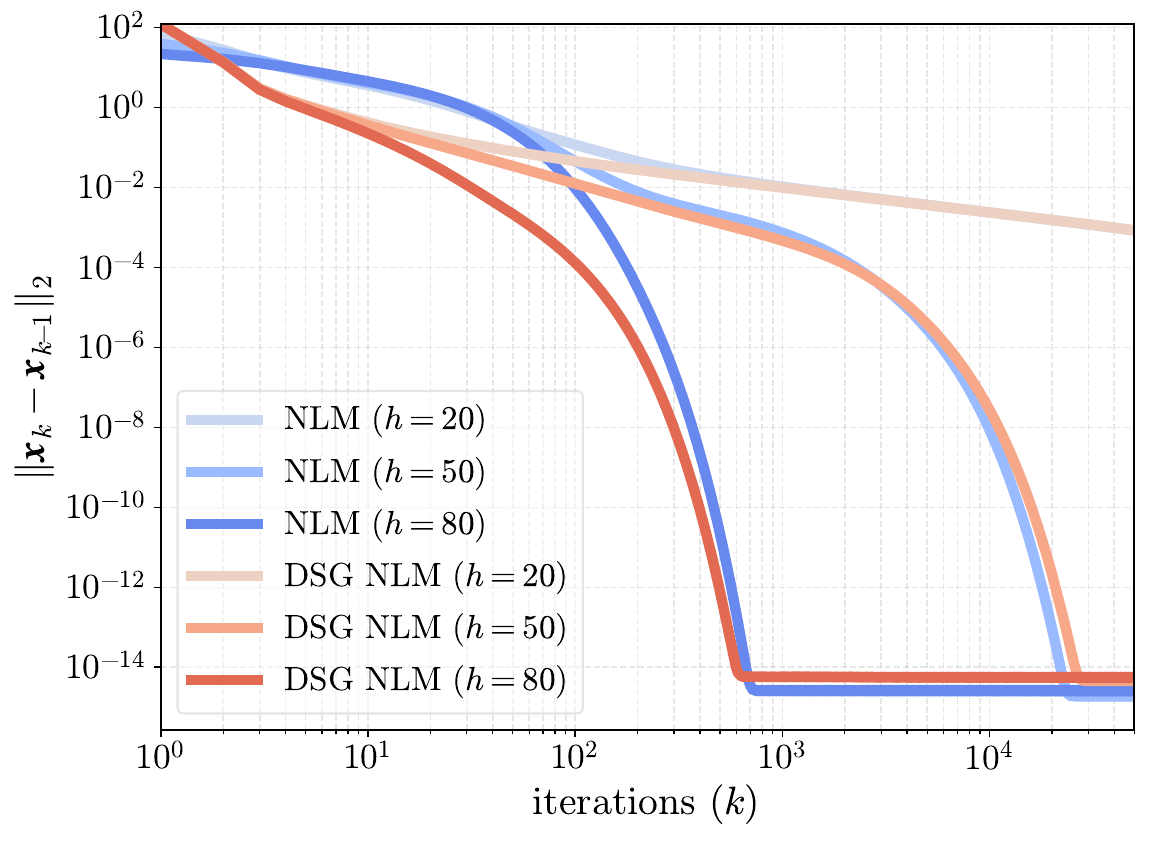}}\hspace{0.07\linewidth}
    \subfloat[PSNR convergence]{\includegraphics[width=0.42\linewidth]{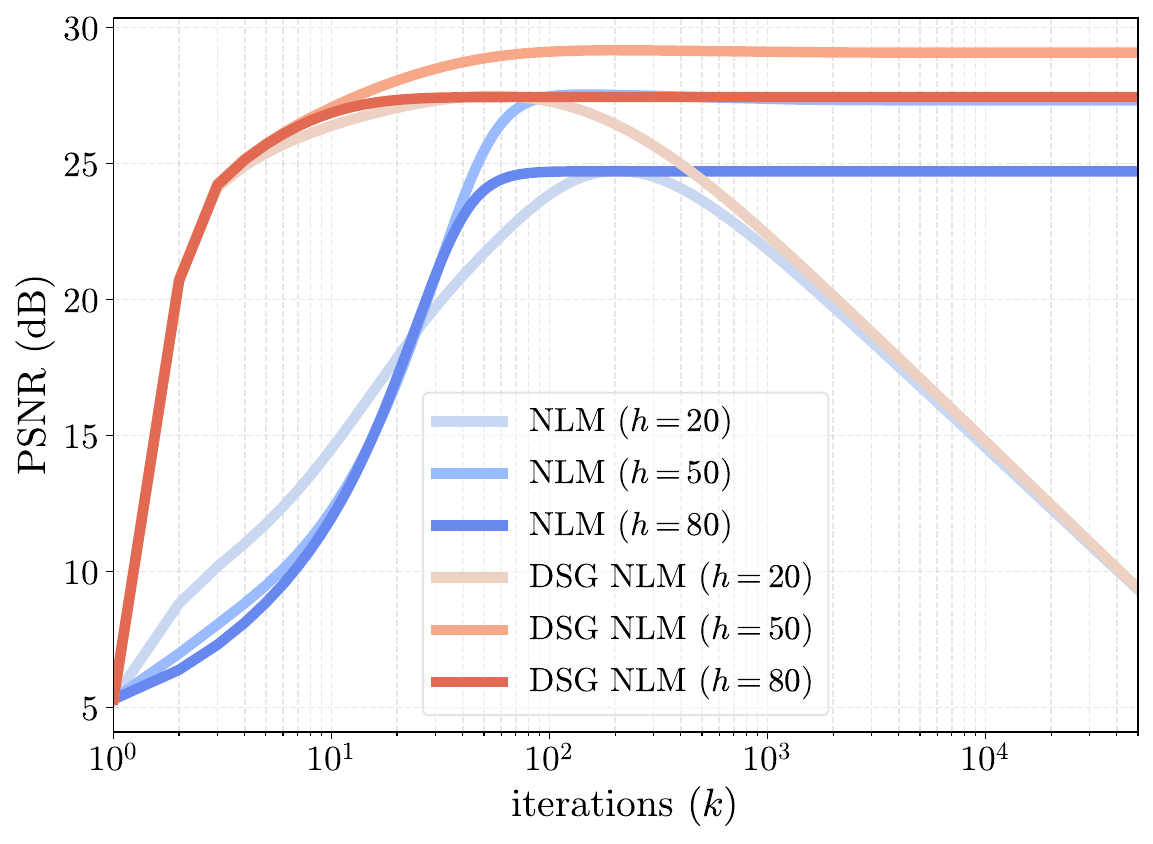}}
    \caption{We show the successive difference of the iterates generated by \textbm{PnP-ISTA} and \textbm{Sc-PnP-ISTA} with the symmetric DSG-NLM and the nonsymmetric denoiser NLM. This is for the Gaussian deblurring experiment in Fig.~\ref{fig:deblurring}. Also shown on the right is the evolution of the PSNR with iterations. We study the convergence behavior for different values of $h$, the bandwidth of the isotropic Gaussian kernel used in both denoisers. Since the final reconstruction is independent of the initialization (due to the contractive nature of the fixed-point operator), we have initialized using a noisy image. As discussed in Thm.~\ref{thm:boundISTA} and Fig. \ref{fig:thm5-lambda2-effect}, we observe that a higher bandwidth leads to a lower $ \lambda_2 $ and faster convergence. The convergence is slow for a small bandwidth of $h=20$ but improves significantly as $h$ increases. However, there is a limit to this improvement, as increasing the bandwidth indefinitely can degrade reconstruction quality. As seen in Fig.~(b), the best reconstruction occurs at $h=50$, with further increases in $h$ accelerating convergence but decreasing reconstruction quality.}
    \label{fig:convergence-plot}
\end{figure*}

\section{Numerical Experiments}
\label{sec:exp}

We present numerical results\footnote{The codes are available at \texttt{github.com/arghyasinha/tsp-kernel-denoiser}.} to validate our key results, specifically Thms.~\ref{thm:boundISTA}, \ref{thm:inp-deblur-sym}, and \ref{thm:inp-sr-Js}. Moreover, we wish to investigate the dependence of the contraction factor––which controls the convergence rate––on the following factors: the second eigenvalue $\lambda_2$ of $\W$ for \textbm{PnP-ISTA}, the second dominant eigenvalue $\zeta_*$ of $\V$ for \textbm{PnP-ADMM}, the fraction of observed pixels $\mu$ for inpainting, and the downsampling rate for superresolution. It is important to note that these experiments are not intended to assess the reconstruction accuracy of \textbm{PnP-ISTA} and \textbm{PnP-ADMM} with kernel denoisers, nor to compare them with trained denoisers; for such comparisons, we refer to~\cite{nair2022plug,nair2019hyperspectral,sreehari2016plug,Teodoro2019PnPfusion}. Instead, we aim to provide numerical evidence supporting the theoretical findings.

Before proceeding with the result, we discuss the computation of  $\|\P\|_2=\sigma_{\max}(\P)$, the largest singular value of the update operator in \textbm{ISTA}. Since \(\W\) and \(\A\) are too large to be manipulated as matrices, we cannot store $\P$ as a matrix and use a direct method for finding $\sigma_{\max}(\P)$. Instead, we treat them as operators (or black boxes) and use the power method \cite{Horn_Johnson_1985}. Similarly, we use the power method to compute \(\|\R\|_2\), \(\|\P_s\|_\D\), and \(\|\R_s\|_\D\). Before applying the power method to the scaled variants \(\|\P_s\|_\D\) and \(\|\R_s\|_\D\), we performed an additional step of changing the norm from \(\|\cdot\|_\D\) to \(\|\cdot\|_2\) using \eqref{D-ip-norm}. To compute the second eigenvalue \(\lambda_2\), we applied the power method to the deflated operator \(\W - (1/n)\e\e^\top\). This approach is effective because \(1\) is the leading eigenvalue of \(\W\), with a one-dimensional eigenspace spanned by \(\e\) (see Prop. \ref{propertiesW}). Consequently, the deflated operator has \(\lambda_2\) as its dominant eigenvalue. Similarly, we computed \(\zeta_*\) using the deflated operator \(\V - (1/n)\e\e^\top\).

We perform inpainting, deblurring, and superresolution experiments. Unless stated otherwise, the forward operator $\A$ corresponds to randomly sampling $30\%$ pixels for inpainting, and a $25 \times 25$ Gaussian blur with standard deviation $1.6$ for deblurring. For superresolution,  $\A = \S\B$, where we take \(\B\) is a \(9 \times 9\) uniform blur and \(\S\) coresponds to \(2\times\) downsampling. As for the denoiser, we consider the nonsymmetric NLM and the symmetric DSG-NLM denoisers discussed in Sec. \ref{sec:kd}. Specifically, the key parameter in our experiments is the bandwidth $h$ (refer to Sec. \ref{sec:kd}), the standard deviation of the underlying isotropic Gaussian kernel used in these denoisers.

We present deblurring and superresolution results using \textbm{PnP-ISTA} and \textbm{Sc-PnP-ISTA} in Figs.~\ref{fig:deblurring} and \ref{fig:superresolution}. Additionally, in Fig.~\ref{fig:bar_plot_symm_vs_kernel}, we compare reconstruction across various images from the Set12 dataset. The results suggest that DSG-NLM generally outperforms its nonsymmetric counterpart. In Fig.~\ref{fig:thm5-mu-effect}, we notice how the contraction factor improves as we increase the fraction of observed pixels in inpainting. A similar trend is observed for superresolution in Fig.~\ref{fig:thm8-mu-effect}. These findings align with the behavior predicted by the bounds in Thms.~\ref{thm:boundISTA} and \ref{thm:inp-sr-Js}.

In Fig.~\ref{fig:dncnn}, we expand our analysis to include a nonlinear trained denoiser~\cite{pesquet_learning_2021}. 
However, since the denoiser is nonlinear, we need to work with the Jacobian of the update operator in~\eqref{eq:pnp-ista}––with $\W$ as the nonlinear denoiser––and its spectral norm. Despite the nonlinearity, we empirically observe in Figs.~\ref{fig:thm5-mu-effect} and \ref{fig:thm8-mu-effect} that the spectral norms follow a similar pattern. However, since the spectral radius occasionally exceeds 1, the operator cannot be expected to be locally contractive.

Finally, we study the impact of parameters derived from the denoiser on the convergence. In Fig. \ref{fig:thm5-lambda2-effect}, we show that a denoiser with a lower \(\lambda_2\) results in a smaller contraction factor, consistent with the predictions of Thm.~\ref{thm:boundISTA}. Similarly, Fig.~\ref{fig:zeta-effect} highlights the relation between the contraction factor of \textbm{PnP-ADMM} and \(\zeta_*\), as anticipated in Thm. \ref{thm:inp-deblur-sym}.
We decrease $\lambda_2$ (and $\zeta_*$) by increasing the bandwidth $h$ of the Gaussian kernel.
Note that Thms. \ref{thm:boundISTA} and \ref{thm:inp-sr-Js} do not allow for similar predictions regarding \textbm{Sc-PnP-ISTA} and \textbm{Sc-PnP-ADMM} as the bounds are influenced not only by \(\lambda_2\) (or \(\zeta_*\)) but also by \(\D\). Since both \(\lambda_2\) (or \(\zeta_*\)) and $\D$ are derived from $\W$, in the process of changing \(\lambda_2\) we also end up altering $\D$––it is not possible to modify one while keeping the other fixed. This contrasts with the symmetric case, which does not exhibit this complex coupling.
However, for smaller values of $h$, where changes in $\D$ have a minimal impact, we observe a reduction in the contraction factor as $\lambda_2$ decreases (see Fig. \ref{fig:convergence-plot}). In Fig. \ref{fig:convergence-plot}, we demonstrate the effects of varying the bandwidth $h$ on the convergence of both \textbm{PnP-ISTA} and \textbm{Sc-PnP-ISTA} in practice. Specifically, we observe that while increasing $h$ can significantly speed up convergence for both algorithms, an excessive increase in $h$ may lead to subpar reconstructions.

\section{Conclusion}

We proved that the iterates of PnP-ISTA and PnP-ADMM converge globally and linearly to a unique reconstruction when using generic kernel denoisers. This does not follow from standard convex optimization theory since the associated loss function is not strongly convex. The key insight is that the contractivity is not with respect to the standard Euclidean norm but rather a norm induced by the kernel denoiser. This requires modifications to the definition of the gradient in PnP-ISTA and the proximal operator in PnP-ADMM. We also performed experiments with the archetype kernel denoiser, NLM, and its symmetric counterpart, DSG-NLM. We found that the reconstruction quality is typically better for DSG-NLM than NLM. However, DSG-NLM requires about $3\times$ more computations than NLM. Thus, there is a tradeoff between per-iteration cost and reconstruction quality within the class of kernel denoisers.

We derived quantitative bounds on the convergence rate for inpainting, deblurring, and superresolution. The primary challenge was identifying a bound that is $<1$. To achieve this, we established various relationships between the parameters of the denoiser and the forward operator. For ADMM, an additional challenge involved evaluating an inverse operator, which we approximated by exploiting specific properties of the forward operator. Nonetheless, there is scope for improving the bounds and using them to optimize the algorithmic parameters.

The derived bounds provide critical insights into the interactions among the different algorithmic components. Notably, we found that the spectral gap of the denoiser significantly influences the convergence rate: the larger the gap, the faster the convergence. However, artificially increasing the spectral gap by adjusting denoiser parameters (such as the bandwidth) is contingent upon the forward operator and may not work in all cases. An interesting question is whether we can design a kernel denoiser with an enhanced spectral gap and deploy it within PnP algorithms to achieve faster convergence without compromising the reconstruction quality.
\section*{Acknowledgement} We sincerely thank the editor and the reviewers for their suggestions that helped to improve the paper a lot.

\section*{Appendix}

\subsection{Proof of Lemma~\ref{MN-lemma-1}}
\label{pf:MN-lemma-1}

Let $\M$ and $\N$ be as in Lemma~\ref{MN-lemma-1}. Note that $\|\M\N\| <1 $ is equivalent to requiring that $\|\M\N  \x\| < \|\x\|$ for all nonzero $\x \in \Re^n$. Now, since $\sigma(\M) \subset (-1,1]$, we have $\|\M \z \| \leqslant \|\z\|$ for all $\z \in \Re^n$. Therefore, if $\x \notin \fix(\N)$, then $\|\N\x\| < \|\x\|$ by Prop.~\ref{prop:ctr-fix}, and hence
\begin{equation*}
\|\M \, (\N  \x) \| \leqslant \| \N  \x \| < \|\x\|.
\end{equation*}
On the other hand, for any $\x \in \fix(\N)$, we have $\|\M (\N   \x)\| = \|\M\x\|$. However, from  $\fix(\M) \cap \fix(\N) =\{\ze\}$, we can conclude that $\x \notin \fix(\M)$, and hence, by Prop.~\ref{prop:ctr-fix}, $\|\M\x\| < \|\x\|$. This completes the proof of Lemma~\ref{MN-lemma-1}. 

\subsection{Proof of Lemma~\ref{MN-lemma-2}}
\label{pf:MN-lemma-2}

Since $\M$ is self-adjoint, we have
\begin{equation*}
 \norm{\M\z}^2 = \sum_{i=1}^n \lambda_i^2 \inner{\z}{\q_i}^2,
\end{equation*}
for all $\z \in \Re^n$, where $\q_1,\ldots,\q_n$ is an orthonormal basis of eigenvectors of $\M$ with eigenvalues $\lambda_1,\ldots,\lambda_n$. In particular, for any $\x \in \Re^n$, 
\begin{equation*}
\norm{\M\N\x}^2=   \norm{\M(\N\x)}^2 = \sum_{i=1}^n \lambda_i^2 \inner{\N\x}{\q_i}^2.
\end{equation*}
Now, using \eqref{order-lambda}, we can write
\begin{equation}
\label{eq:MNx}
    \norm{\M\N\x}^2  \leqslant  \lambda_1^2\inner{\N\x}{\q_1}^2 + \lambda_2^2\sum_{i=2}^n \inner{\N\x}{\q_i}^2.
\end{equation}
Furthermore, we can regroup the right side of \eqref{eq:MNx} into
\begin{align*}
(\lambda_1^2 - \lambda_2^2)\inner{\N\x}{\q_1}^2& +\lambda_2^2\sum_{i=1}^n \inner{\N\x}{\q_i}^2 \\
    &= (\lambda_1^2 - \lambda_2^2)\inner{\N\x}{\q_1}^2 +\lambda_2^2 \, \norm{\N\x}^2. 
\end{align*}
Now, $\inner{\N\x}{\q_1}=\inner{\x}{\N\q_1}$, and $|\inner{\x}{\N\q_1}| \leqslant  \norm{\x}\, \norm{\N \q_1}$ by Cauchy-Schwarz. Finally, from $\sigma(\N) \subset (-1,1]$,  we have $\norm{\N\x} \leqslant \norm{\x}$. Thus, we obtain 
\begin{equation*}
    \norm{\M\N\x}^2 \leqslant \big(  (\lambda_1^2 - \lambda_2^2)\norm{\N \q_1}^2 +\lambda_2^2 \big) \norm{\x}^2.
\end{equation*}
Since this holds for any $\x \in \Re^n$, we have \eqref{eq:MNbound}.

\subsection{Proof of Thm.~\ref{thm:inp-bound-Js}}
\label{pf:inp-bound-Js}

We use Lemma~\ref{MN-lemma-2} with $\M=\V$ and $\N=\F_s$. We can check that $\V$ and $\F_s$ are self-adjoint on $(\Re^n,\langle \cdot, \cdot \rangle_\D)$. Moreover, $|\lambda_1|=1, \q_1=\e/\|\e\|_\D$ , and $|\lambda_2|= \zeta_*$ (see~\eqref{def:zeta}).   

To bound $\|\F_s  \q_1\|^2_\D$, we need to compute $\|\F_s  \e\|^2_\D$. Let $\Omega$ be as in the proof Thm.~\ref{thm:inp-deblur-sym}. Since $\F_s$ is diagonal, we have
\begin{equation*}
(\F_s\e)_i =
 \begin{cases}
     (1-\rho \D_{ii}^{-1})/(1+\rho \D_{ii}^{-1}) & \quad  i \in \Omega, \\
     1 & \quad   i \notin \Omega.
\end{cases}
\end{equation*}
In particular, let $\theta = \max_{i}\, (1-\rho \D_{ii}^{-1})/(1+\rho \D_{ii}^{-1}) <1$. Then we have
\begin{align*}
\|\F_s \e\|^2_\D &= \sum_{i \notin \Omega} \D_{ii}\cdot 1 + \theta^2 \sum_{i \in \Omega} \D_{ii} \\
& \leqslant  \mathrm{trace}(\D) - (1-\theta^2) \sum_{i \in \Omega} \D_{ii}.
\end{align*}
Thus, from $\|\e\|_\D^2=\mathrm{trace}(\D)$ and $1 \leqslant \D_{ii} \leqslant \|\D\|_2$, we have
\begin{align*}
\|\F_s  \q_1\|^2_\D & \leqslant 1 -  (1-\theta^2) \frac{\sum_{i \in \Omega} \D_{ii}}{\mathrm{trace}(\D)}\\
 &\leqslant 1 -  (1-\theta^2) \frac{\mu}{\|\D\|_2}.
\end{align*}
Substituting this in Lemma~\ref{MN-lemma-2}, we get \eqref{eq:inp-bound-Js}.

\bibliographystyle{IEEEtran}
\bibliography{refs}

\end{document}